\numberwithin{equation}{section}
\newcommand{\pa}{\partial}
\newcommand{\bu}{\bullet}
\newcommand{\CC}{\mathbb C}
\begin{document}


\title[Gluing Formula for Determinants]
{Combinatorial Quantum Field Theory and Gluing Formula for Determinants}

\author{Nicolai Reshetikhin}
\address{N.R.: Department of Mathematics, University of California,
Berkeley,
CA 94720, USA \& KdV Institute for Mathematics, University of Amsterdam,
Science Park 904, 1098 XH Amsterdam, The Netherlands \& ITMO University,
Kronverkskii ave. 49, Saint Petersburg 197101, Russia.}

\email{reshetik@math.berkeley.edu}
\urladdr{http://math.berkeley.edu/~reshetik}

\author{Boris Vertman}
\address{B.V.: Mathematisches Institut,
Universit\"at M\"unster,
Germany}
\email{vertman@uni-muenster.de}

\subjclass[2000]{58J52; 81T27}
\date{This document was compiled on: \today}

\begin{abstract}
{We define the combinatorial Dirichlet-to-Neumann operator
and establish a gluing formula for determinants of discrete Laplacians
using a combinatorial Gaussian quantum field theory. In case of a diagonal
inner product on cochains we provide an explicit local expression for the discrete
Dirichlet-to-Neumann operator. We relate the gluing formula to the corresponding Mayer-Vietoris formula
by Burghelea, Friedlander and Kappeler for zeta-determinants of analytic Laplacians,
using the approximation theory of Dodziuk. Our argument motivates
existence of gluing formulas as a consequence of a gluing principle on the discrete level.}
\end{abstract}

\maketitle

\section{Introduction}\label{intro}

\subsection{Gluing formula for zeta-determinants}
Investigation of the cut and paste behaviour of
zeta-regularized determinants has been initiated by
Forman \cite{For:FDA} and Burghelea-Friedlander-Kappeler in
\cite{BFK:MVT}. Their studies have triggered further
analysis by various authors including Park-Wojciechowski
\cite{ParWoj:ADO} and Lee \cite{Lee:BFK}.

The approach of \cite{For:FDA} and \cite{BFK:MVT} to the gluing behavior
of zeta-regularized determinants is purely analytic.
In the present paper we use quantum
field theoretic arguments to establish a gluing formula
for determinants of combinatorial Laplacians and
relate it to the gluing formulas for zeta-determinants of analytic Laplacians
using the approximation theory of Dodziuk \cite{Dod:FDA},
and its extension to manifolds with boundary by M\"uller
\cite{Mue:ATA}.

Our arguments naturally motivate
existence of gluing formulas for the non-local spectral invariants
such as zeta-regularized determinants as a consequence of the corresponding
gluing principle in the discrete case. The underlying analysis is firmly embedded
in the formalism of a discrete Gaussian quantum field theory according to the 
axioms of Atiyah-Segal. In fact we provide an explicit construction of 
such a theory here.

Two main results  are gluing formula for combinatorial Laplacians, and the
continuum limit for the ratio of determinants of Laplace operators.
An important part of the paper is the formulation of Gaussian quantum field theory
for Whitney product on cochains.

\subsection{Gluing formula for combinatorial determinants} \label{int-2}
Consider an $n$-dimensional simplicial complex $K$,
possibly with an $(n-1)$ dimensional boundary $L$. Assume that the boundary subcomplex
$L \subset K$ has three connected components $L_1, L_2$ and $L_3$. We write
$$L=L_1 \sqcup L_2 \sqcup L_3.$$

Assume that $K$ is equipped with a Riemannian structure (see \S \ref{comb}) and
that $L_1$ and $L_2$ are isomorphic via an isometry $f$ of simplicial complexes with Riemannian
structure. Gluing the boundary components $L_1$ and $L_2$ via the isometric identification $f$
defines a new Riemannian simplicial complex $K_f$ with a single boundary
subcomplex $L_3$.

Consider a combinatorial Hermitian vector bundle $E$ over $K_f$ with
a flat connection $\A$, which gives rise to combinatorial coboundary
operators on the cochains. $E$ pulls back to a vector bundle over $K$
under the map of simplicial complexes $K\to K_f$ introduced above.
This defines a scalar product on cochains with values in the corresponding 
vector bundle and gives rise to the combinatorial
Laplacians $\Delta_{K_f}$ and $\Delta_{K}$ on cochains of degree zero with Dirichlet boundary conditions at
the corresponding boundary complexes.

Let $Q_K$ be the linear operator connecting the scalar products on $C^0(K,E)$,
corresponding to the given Riemannian structure on $K$ and the diagonal (in the simplex basis)
Euclidean inner product. More explicitly,
let $\langle \cdot , \cdot \rangle_0$ denote the Euclidean inner product on cochains. Then, for
any $\phi, \psi \in C^0(K,E)$ we have $\langle \phi, \psi \rangle_K = \langle Q_K \phi, \psi \rangle_0$.
We write $\Delta^{loc}_{K}:= Q_{K\backslash L} \circ \Delta_{K}$ and
$\Delta^{loc}_{K_f}:= Q_{K_f\backslash L_3} \circ \Delta_{K_f}$ for the \emph{local} Laplacians with respect to
the Euclidean inner product on cochains with Dirichlet boundary conditions.
In \S \ref{QFT} we prove the following gluing formula for determinants of combinatorial
local Laplacians.

\begin{thm}\label{main1}
Assume the Riemannian structure on $K$ is local.
Then the determinants of combinatorial Laplacians satisfy the following identity
\begin{align*}
  \frac{\det' (\Delta^{loc}_{K_f})}
{\det' (\Delta^{loc}_{K})}
= \det\nolimits' \mathscr{R}^{loc}_c(K_f,L_2),
\end{align*}
where $\det'$ is the product of non-zero eigenvalues and $\mathscr{R}^{loc}_c(K,L_2)$ is the composition
of $Q_{L_2}$ with the combinatorial analog of the
Dirichlet-to-Neumann map defined in (\ref{DN-gluing}).
\end{thm}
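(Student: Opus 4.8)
The plan is to realize both determinants as normalizations of Gaussian partition functions of the discrete field theory and to perform the interior functional integral in two stages, so that the combinatorial Dirichlet-to-Neumann operator emerges automatically as the effective quadratic form on the glued boundary. First I would attach to a complex with boundary the Gaussian partition function $Z=\int e^{-\frac12\langle\phi,\Delta^{loc}\phi\rangle_0}\,\mathcal D\phi$, the integral running over cochains satisfying the prescribed Dirichlet conditions and the quadratic form being built from the \emph{local} (Euclidean) Laplacian. The finite-dimensional Gaussian integration formula identifies $Z$ with $(\det\nolimits'\Delta^{loc})^{-1/2}$ up to an explicit constant coming from the normalization of the measure and from the volume of the space of harmonic cochains. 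Applied to $K$ with Dirichlet conditions on $L=L_1\sqcup L_2\sqcup L_3$, and to $K_f$ with Dirichlet conditions on $L_3$, this yields $Z_K\propto(\det\nolimits'\Delta^{loc}_K)^{-1/2}$ and $Z_{K_f}\propto(\det\nolimits'\Delta^{loc}_{K_f})^{-1/2}$.

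The decisive step is to recompute $Z_{K_f}$ by integrating out first the fields supported away from the seam. Writing a field on $K_f$ as a pair $(\phi_{\mathrm{int}},\phi_b)$, where $\phi_b$ lives on the glued boundary $b=L_1=L_2$ and $\phi_{\mathrm{int}}$ on the remaining cochains, the assumed locality of the Riemannian structure makes the quadratic form block-local across $b$. Integrating over $\phi_{\mathrm{int}}$ with $\phi_b$ held fixed is then precisely the Dirichlet problem on the interior of $K$ with the value $\phi_b$ imposed on both $L_1$ and $L_2$; consequently this integral produces exactly the Dirichlet determinant $(\det\nolimits'\Delta^{loc}_K)^{-1/2}$ multiplied by the Gaussian weight $e^{-\frac12\langle\phi_b,\mathscr{R}^{loc}_c(K_f,L_2)\phi_b\rangle_0}$ of the induced boundary action. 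Identifying this effective action with the claimed operator, the composition of $Q_{L_2}$ with the combinatorial Dirichlet-to-Neumann map of (\ref{DN-gluing}), is where the explicit local expression and the conversion factor $Q_{L_2}$ between the Riemannian and the Euclidean inner products enter; this identification is the heart of the argument.

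Carrying out the remaining Gaussian integration over $\phi_b$ then gives $Z_{K_f}\propto(\det\nolimits'\Delta^{loc}_K)^{-1/2}(\det\nolimits'\mathscr{R}^{loc}_c(K_f,L_2))^{-1/2}$. Comparing the two expressions for $Z_{K_f}$ and solving for the ratio yields $\det\nolimits'\Delta^{loc}_{K_f}/\det\nolimits'\Delta^{loc}_K=\det\nolimits'\mathscr{R}^{loc}_c(K_f,L_2)$ up to the product of the normalization constants, which is the assertion of the theorem.

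The main obstacle I anticipate is not the Gaussian algebra but the bookkeeping of these constants and of the zero modes. Because $\det\nolimits'$ discards the kernels, the degenerate Gaussian formula carries volume factors from the harmonic cochains of $\Delta^{loc}_K$, $\Delta^{loc}_{K_f}$ and $\mathscr{R}^{loc}_c(K_f,L_2)$, and I would have to verify, by a Mayer--Vietoris / long exact sequence argument in the cohomology of the pair, that these volume factors cancel exactly, so that the proportionality constant is $1$ and no spurious torsion-type factor survives. Establishing this cancellation, together with the precise matching of the effective boundary form with $\mathscr{R}^{loc}_c(K_f,L_2)$ including the factor $Q_{L_2}$, constitutes the technical core of the proof.
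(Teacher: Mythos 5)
Your proposal follows essentially the same route as the paper: there, both determinants are realized as Gaussian partition functions $Z_{K,L}(\eta)=\int_{p^{-1}(\eta)}e^{-S_{K,L}(\phi)}\,d\phi$, the gluing axiom is proved by Fubini using locality of the action, each integral is evaluated in closed form, and the determinant identity is read off from the pre-exponential factors, with the combinatorial Dirichlet-to-Neumann operator (including the extra $\Delta_{L_2}$ term of \eqref{DN-gluing}, which arises from the seam measure $e^{-S_{L_2}(\eta_2)}d\eta_2$) appearing exactly as your effective boundary action. The only implementation difference is that the paper avoids your zero-mode/volume-factor bookkeeping by working throughout with the invertible massive operators $\Delta^{loc}+m^{2}Q$ and passing to the $\det\nolimits'$ statement afterwards, rather than tracking harmonic volumes via a Mayer--Vietoris cancellation.
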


The operators $Q_K$ satisfy their own intricate gluing law which we make explicit at the end of \S \ref{QFT}.

\subsection{Relating both gluing formulae in the discretization limit}\label{relating-gluing}
In \S \ref{approximation} we establish the relationship between this gluing formula and
the gluing formula for the analytic zeta-regularized
determinants.

Let $(M,g)$ be a compact Riemannian manifold
with boundary $\partial M$ that consists of three disjoint boundary components
$N_1,N_2$ and $N_3$. Consider a flat Hermitian vector bundle $(E,h, \A)$, i.e. Hermitian bundle with a Hermitian flat connection $\A$. Flatness implies product structure of $h$ and $\A$ over a collar neighborhood of $\partial M$. We denote by $\Delta_{M}$ the Laplace Beltrami operator acting of functions on $M$ with values in $E$ and with Dirichlet boundary conditions at the boundary.

Assume that $g$ is product near $N_1$ and $N_2$ and define a smooth Riemannian manifold $\widetilde{M}$ by gluing a second copy of $M$ along $N_1 \sqcup N_2$. Note that $\partial \widetilde{M}$ consists of two copies of $N_3$.

Let $N_1$ and $N_2$
be identified by an isometry $f$, and denote by $M_f$ the Riemannian manifold obtained
from $M$ by gluing $N_1$ onto $N_2$. The flat Hermitian vector bundle $(E,h, \A)$
induces smooth flat vector bundles over $M_f$ and $\widetilde{M}$, since $h$ and $\A$ 
are product near $\partial M$. We write $\Delta_{\widetilde{M}}$ and 
$\Delta_{M_f}$ for the twisted Laplacians on 
$\widetilde{M}$ and $M_f$, respectively, with Dirichlet boundary 
conditions at the respective boundaries.

Consider, as in \S \ref{int-2}, a simplicial complex $K$ which triangulates $M$ with
subcomplexes $L_1, L_2$ and $L_3$ triangulating $N_1, N_2$ and $N_3$, respectively.
Its double $\widetilde{K}$ along $L_1 \sqcup L_2$, with the boundary subcomplex $L_3 \sqcup L_3$ is a simplicial decomposition of $\widetilde{M}$.
The simplicial complex $K_f$, obtained by gluing $K$ along the two identified
boundary components $L_1$ and $L_2$, decomposes $M_f$.

The pullbacks of the combinatorial analog of $E$ over $K$ define combinatorial vector bundles
over $K_f$ and $\widetilde{K}$. The combinatorial Riemannian structure on $K$ defines natural
combinatorial Riemannian structures on $K_f$ and $\widetilde{K}$. The metric structure on $M$
and the Whitney map an define a combinatorial Riemannian structure on $K$ together with
the inner product on corresponding cochains with
values in combinatorial vector bundle.
Denote by $\Delta_{K_f}$ and $\Delta_{\widetilde{K}}$ the combinatorial
Laplace operators on cochains of degree zero, defined with respect to
this inner product, with Dirichlet boundary conditions at the
boundary. In \S \ref{approximation} we prove the following

\begin{thm}\label{main2}
As the mesh $\delta>0$ of the triangulation $K$ goes to zero
under standard subdivisions\footnote{Standard subdivisions were introduced by
Whitney in \cite{Whi:GIT}}
\begin{equation}
 \begin{split}
\lim_{\delta \to 0}  \frac{(\det' \Delta_{K_f})^2}
{\det' \Delta_{\widetilde{K}}}
= \frac{(\det\nolimits_\zeta \Delta_{M_f})^2}
{\det\nolimits_\zeta \Delta_{\widetilde{M}}}
 \end{split}.
\end{equation}
\end{thm}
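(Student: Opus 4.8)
The plan is to strip the divergent ``bulk'' contributions from both sides by a gluing reduction, leaving only boundary Dirichlet-to-Neumann data, and then to pass to the mesh limit at the level of these boundary operators. On the combinatorial side I would apply the gluing principle of \S\ref{QFT} that underlies Theorem \ref{main1}, now for the Whitney inner product rather than the diagonal one. Writing $\Delta_K$ for the full-Dirichlet Whitney Laplacian on $K$ (Dirichlet on all of $L_1 \sqcup L_2 \sqcup L_3$), self-gluing $K$ along $L_1 \cong L_2$ and doubling $K$ along $L_1 \sqcup L_2$ yield the two identities
\[
\frac{\det' \Delta_{K_f}}{\det' \Delta_{K}} = \det\nolimits' \mathscr{R}_c(K_f,L_2), \qquad
\frac{\det' \Delta_{\widetilde K}}{(\det' \Delta_{K})^2} = \det\nolimits' \mathscr{R}_c(\widetilde K, L_1 \sqcup L_2),
\]
where in the second case the two components arising from the cut each carry the full-Dirichlet data of $K$. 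Squaring the first identity and dividing by the second, the factors $(\det'\Delta_K)^2$ cancel exactly, and the combinatorial ratio of Theorem \ref{main2} collapses to the reduced boundary ratio $(\det'\mathscr{R}_c(K_f,L_2))^2 / \det'\mathscr{R}_c(\widetilde K, L_1 \sqcup L_2)$.

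On the analytic side I would run the Burghelea--Friedlander--Kappeler formula in the same pattern. Cutting $M_f$ along the glued hypersurface $N := N_1 = N_2$ recovers $M$ with full Dirichlet data, while cutting $\widetilde M$ along $N_1 \sqcup N_2$ recovers two isometric copies of $M$; the BFK formula then gives
\[
\det\nolimits_\zeta \Delta_{M_f} = c_N \, \det\nolimits_\zeta \Delta_M \, \det\nolimits_\zeta R_{M_f}, \qquad
\det\nolimits_\zeta \Delta_{\widetilde M} = c_N^2 \, (\det\nolimits_\zeta \Delta_M)^2 \, \det\nolimits_\zeta R_{\widetilde M}.
\]
Here the BFK local constant is multiplicative over the connected components of the cut, and by the isometry $f$ together with the product structure of $g, h, \A$ near $N_1, N_2$ it takes the same value $c_N$ on each copy; likewise each cut copy contributes the same bulk determinant $\det_\zeta \Delta_M$. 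Forming the analogous square-over-quotient, the constants $c_N^2$ and the factors $(\det_\zeta \Delta_M)^2$ cancel, leaving $(\det_\zeta R_{M_f})^2 / \det_\zeta R_{\widetilde M}$. Thus Theorem \ref{main2} is equivalent to the single convergence statement
\[
\lim_{\delta \to 0} \frac{(\det'\mathscr{R}_c(K_f,L_2))^2}{\det'\mathscr{R}_c(\widetilde K, L_1 \sqcup L_2)}
= \frac{(\det_\zeta R_{M_f})^2}{\det_\zeta R_{\widetilde M}}.
\]

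To prove this I would invoke the approximation theory of Dodziuk and its boundary extension by M\"uller: the combinatorial harmonic extension of boundary data converges to the analytic one, so $\mathscr{R}_c(K_f,L_2) \to R_{M_f}$ and $\mathscr{R}_c(\widetilde K, L_1 \sqcup L_2) \to R_{\widetilde M}$ spectrally on every fixed frequency. This does not by itself force convergence of the $\det'$, since the combinatorial operators are finite matrices whose highest eigenvalues are lattice artifacts of size $O(\delta^{-1})$ with no analytic meaning, so each $\det' \mathscr{R}_c$ diverges against the corresponding $\det_\zeta R$. The point is that this residual discretization anomaly is \emph{local} on the gluing hypersurface, hence a product over its boundary cells; since $L_2$ and the two isometric copies in $L_1 \sqcup L_2$ carry the same such anomaly per copy, it appears to the first power per copy and cancels in the square-over-quotient exactly as the bulk anomaly did. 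What survives is the convergence of the finitely many small boundary eigenvalues, which is precisely what Dodziuk--M\"uller controls.

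The main obstacle is this last anomaly bookkeeping: making rigorous that neither the bulk determinant nor the Dirichlet-to-Neumann determinant converges on its own, isolating both divergences as local functionals of the combinatorial Riemannian structure near the cut, and verifying that the combination $(\,\cdot\,)^2 / (\,\cdot\,)$ is exactly the anomaly-free one. Two geometric inputs are essential and delicate: the Whitney-induced combinatorial metrics on $L_1$ and $L_2$ must agree under $f$, so that the two local anomalies are genuinely equal, which is guaranteed by the product structure of $g$ near $N_1, N_2$; and the combinatorial gluing formula for the double must be matched term by term with the BFK decomposition, so that the cancellation of $\det'\Delta_K$ against $\det_\zeta\Delta_M$ and of the discrete anomaly against the BFK constant happen simultaneously. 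Controlling the high-frequency tail uniformly in $\delta$, via a heat-kernel comparison localized near the hypersurface, is where the real analytic work lies.
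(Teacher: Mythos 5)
Your overall idea --- that the square-over-quotient combination is precisely the one in which all divergent local contributions cancel --- is the right one, but you localize the anomaly on the wrong object, and the step that carries the entire analytic weight is asserted rather than proved. After your gluing reduction, everything hinges on the single limit
\[
\lim_{\delta \to 0} \frac{(\det'\mathscr{R}_c(K_f,L_2))^2}{\det'\mathscr{R}_c(\widetilde K, L_1 \sqcup L_2)}
= \frac{(\det\nolimits_\zeta R_{M_f})^2}{\det\nolimits_\zeta R_{\widetilde M}},
\]
and your justification --- that the discretization anomaly of $\det'\mathscr{R}_c$ is ``local on the hypersurface, hence a product over boundary cells, hence equal per copy'' --- has no mechanism behind it. The Dirichlet-to-Neumann operator is a nonlocal (pseudodifferential) operator on the cut; its combinatorial analogue is built from the inverse of the full interior Laplacian, so its matrix entries, its high eigenvalues, and hence its determinant anomaly are \emph{not} local functionals of the combinatorial Riemannian structure near the cut. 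There is no analogue of the Dodziuk--M\"uller zeta-convergence theorem for these boundary operators in the literature you would be invoking, and producing one is at least as hard as the original problem. In effect you have replaced the statement to be proved by an equivalent statement (it is essentially Corollary \ref{main3}, which the paper \emph{deduces from} Theorem \ref{main2} rather than using to prove it) and then argued for it heuristically. A secondary gap: your combinatorial gluing identity $\det'\Delta_{K_f}/\det'\Delta_K = \det'\mathscr{R}_c(K_f,L_2)$ is not what Theorem \ref{main1} gives for the Whitney product; since that inner product is non-diagonal, the correct identity \eqref{gluing-loc-formula} carries the extra factor $\det Q_{L_2}/\det(Q_{L_2}-A^t Q_{K\backslash L}^{-1}A)$, which must be tracked through your cancellation (the paper needs a separate Proposition to dispose of it).

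The paper's proof in \S\ref{approximation} avoids the Dirichlet-to-Neumann operator entirely and puts the locality where it can actually be exploited: in M\"uller's parametrix $E_{M_f}(s)=\sum_\A \phi_\A D_\A^{-s}\psi_\A$ for the complex power of the \emph{bulk} Laplacian. Choosing the admissible covering adapted to the cut gives the exact identity $E_{M_f}(s)=\tfrac12 E_{\widetilde M}(s)$ and its combinatorial counterpart \eqref{123-c} up to an error vanishing with the mesh; Theorem \ref{conv} (M\"uller, Thms.\ 8.43--8.44) then says that the \emph{differences} $\zeta(s,\Delta_{K_f})-\mathrm{Tr}\,E^c_{K_f}(s)$ converge to $\zeta(s,\Delta_{M_f})-\mathrm{Tr}\,E_{M_f}(s)$ uniformly on compacts of $\C$. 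Hence in the combination $2\zeta(s,\Delta_{K_f})-\zeta(s,\Delta_{\widetilde K})$ the non-convergent parametrix traces cancel identically and the remainder converges near $s=0$, giving the theorem by differentiation at $s=0$. If you want to salvage your route, you would need to prove a boundary analogue of Theorem \ref{conv} for the combinatorial Poisson operator and Dirichlet-to-Neumann map; as written, your proposal does not contain a proof of the one limit it reduces everything to.
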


By an application of Theorem \ref{main1}, \ref{main2}
and the gluing formula of Burghelea, Friedlander and Kappeler,
which we recap in \S \ref{BFK}, we arrive at the following

\begin{cor}\label{main3}
Let $\mathscr{R}_a(M_f,N_2)$ be the analytic Dirichlet-to-Neumann map and $\mathscr{C}_{M_f,N_2}$ the
corresponding constants\footnote{This constant has been explicitly identified in \cite{Lee:BFK}.} in the analytic Mayer-Vietoris formula.
Then, as the mesh $\delta>0$ of the triangulation $K$ goes to zero
under standard subdivisions
\begin{equation}
 \begin{split}
\lim_{\delta \to 0} \, &\frac{(\det'\mathscr{R}_c(K_f,L_2))^2}
{\det'\mathscr{R}_c(\widetilde{K},L_1 \sqcup L_2)}
= \frac{\mathscr{C}^2_{M_f,N_2}}{\mathscr{C}_{\widetilde{M},N_1\sqcup N_2}}
\cdot \frac{ (\det\nolimits_\zeta \mathscr{R}_a(M_f,N_2))^2}
{\det\nolimits_\zeta \mathscr{R}_a(\widetilde{M},N_1\sqcup N_2)}
   \end{split}.
\end{equation}
\end{cor}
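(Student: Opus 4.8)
The plan is to assemble the three ingredients the corollary advertises---the combinatorial gluing formula of Theorem \ref{main1}, the continuum limit of Theorem \ref{main2}, and the analytic Mayer-Vietoris formula recalled in \S \ref{BFK}---in such a way that the determinants of the \emph{full} Laplacians cancel and only the Dirichlet-to-Neumann determinants (together with the anomaly constants) survive. The decisive observation is that neither $\det' \Delta_K$ converges to $\det\nolimits_\zeta \Delta_M$ nor $\det' \mathscr{R}_c$ converges to $\det\nolimits_\zeta \mathscr{R}_a$ individually; it is only the symmetric squared-over-double combination that has a clean limit, and the argument is organized entirely around that cancellation.

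First I would record the combinatorial gluing formula in its non-local (Whitney) form, applied to the two gluings at hand. The Gaussian quantum field theory derivation of \S \ref{QFT} that yields Theorem \ref{main1} applies verbatim to the Whitney inner product on cochains, of which the diagonal case is the specialization, and produces
\[ \frac{\det\nolimits'\Delta_{K_f}}{\det\nolimits'\Delta_K} = \det\nolimits'\mathscr{R}_c(K_f, L_2), \qquad \frac{\det\nolimits'\Delta_{\widetilde{K}}}{(\det\nolimits'\Delta_K)^2} = \det\nolimits'\mathscr{R}_c(\widetilde{K}, L_1 \sqcup L_2), \]
where in the second identity I use that $\widetilde{K}$ is glued from two isometric copies of $K$ along $L_1 \sqcup L_2$, so the Dirichlet determinant of each copy equals $\det'\Delta_K$. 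Squaring the first identity and dividing by the second makes the factors $(\det'\Delta_K)^2$ cancel, giving the purely combinatorial identity
\[ \frac{(\det\nolimits'\Delta_{K_f})^2}{\det\nolimits'\Delta_{\widetilde{K}}} = \frac{(\det\nolimits'\mathscr{R}_c(K_f, L_2))^2}{\det\nolimits'\mathscr{R}_c(\widetilde{K}, L_1 \sqcup L_2)}, \]
valid at every mesh $\delta > 0$.

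Next I would pass to the limit $\delta \to 0$ on the left-hand side by Theorem \ref{main2}, obtaining convergence to $(\det\nolimits_\zeta \Delta_{M_f})^2 / \det\nolimits_\zeta \Delta_{\widetilde{M}}$. It then remains to rewrite this analytic ratio via the BFK formula. Cutting $M_f$ along the glued hypersurface $N_2$ returns $M$ with Dirichlet data on $N_1 \sqcup N_2 \sqcup N_3$, so BFK reads $\det\nolimits_\zeta \Delta_{M_f} = \mathscr{C}_{M_f, N_2} \cdot \det\nolimits_\zeta \Delta_M \cdot \det\nolimits_\zeta \mathscr{R}_a(M_f, N_2)$; cutting $\widetilde{M}$ along $N_1 \sqcup N_2$ returns two copies of $M$, giving $\det\nolimits_\zeta \Delta_{\widetilde{M}} = \mathscr{C}_{\widetilde{M}, N_1 \sqcup N_2} \cdot (\det\nolimits_\zeta \Delta_M)^2 \cdot \det\nolimits_\zeta \mathscr{R}_a(\widetilde{M}, N_1 \sqcup N_2)$. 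Forming the ratio $(\det\nolimits_\zeta \Delta_{M_f})^2 / \det\nolimits_\zeta \Delta_{\widetilde{M}}$ cancels $(\det\nolimits_\zeta \Delta_M)^2$ and reproduces exactly the right-hand side of the corollary; equating the two computed expressions for the limit concludes the argument.

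The hard part is not analytic depth but the exactness of the bookkeeping. I must arrange the normalizations in the non-local combinatorial gluing formula so that the two copies of $\det'\Delta_K$ (and, on the analytic side, of $\det\nolimits_\zeta \Delta_M$) cancel \emph{without} leaving a residual interface factor coming from the Whitney inner product on $L_2$ or on $L_1 \sqcup L_2$; an uncancelled such factor would survive into the limit and spoil the identity. Equivalently, the delicate point is that the discretization anomaly of the full Laplacians cancels precisely in the symmetric combination of Theorem \ref{main2}, while the corresponding anomaly of the Dirichlet-to-Neumann operators is absorbed entirely into the ratio $\mathscr{C}^2_{M_f, N_2} / \mathscr{C}_{\widetilde{M}, N_1 \sqcup N_2}$ of BFK constants. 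Verifying that every one of these cancellations is an exact equality, rather than an equality up to an unaccounted multiplicative constant, is the step that requires genuine care.
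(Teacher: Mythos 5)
Your overall architecture --- square the combinatorial gluing formula for $K_f$, divide by the one for $\widetilde{K}$, cancel $(\det'\Delta_K)^2$, pass to the limit via Theorem \ref{main2}, and then run the BFK formula backwards on the analytic side --- is the same as the paper's, and your treatment of the analytic side is correct. The gap is in the combinatorial step. The clean identity $\det'\Delta_{K_f}/\det'\Delta_K=\det'\mathscr{R}_c(K_f,L_2)$ that you display does \emph{not} hold for the Whitney inner product: Theorem \ref{main1} is a statement about the \emph{local} Laplacians $\Delta^{loc}=Q\circ\Delta$, and translating it to the actual Laplacians produces the extra factor recorded in \eqref{gluing-loc-formula}, namely $\det Q_{L_2}/\det\bigl(Q_{L_2}-A^t\circ Q_{K\backslash L}^{-1}\circ A\bigr)$, where $A$ encodes the Whitney interaction between boundary and interior vertices. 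Since $A\neq 0$ for the Whitney product, this factor is not $1$, your two displayed gluing identities are false as stated, and the ``purely combinatorial identity valid at every mesh'' you derive from them actually carries the residual interface ratio
$\det\bigl(Q_{L_1\sqcup L_2}-\widetilde{A}^t\circ Q^{-1}_{(K\backslash L)\sqcup(K\backslash L)}\circ\widetilde{A}\bigr)\big/\det\bigl(Q_{L_2}-A^t\circ Q_{K\backslash L}^{-1}\circ A\bigr)^2$.

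You correctly flag the cancellation of exactly this factor as ``the step that requires genuine care,'' but you never carry it out, and it is not something one can ``arrange'' by normalization: it is a substantive claim. The paper isolates it as the final Proposition of \S\ref{approximation} and proves it from two specific inputs: (i) locality of the Whitney inner product, which forces $Q_{K\backslash L}^{-1}\circ A$ to be supported in the double star of $L_2$ and hence to depend only on the metrization of the collar of the gluing hypersurface once the mesh is small enough; and (ii) the product structure of the metric near $N_1$ and $N_2$, which makes the collar contribution for $\widetilde{K}$ exactly the square of that for $K_f$, so the ratio equals $1$ for $\delta$ sufficiently small. Without (ii) the residual factor need not cancel and would survive into the limit. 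Supplying this argument is what completes the proof; as written, your proposal asserts the conclusion of that Proposition rather than proving it.
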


The presented statements hold without the assumption of orientability for $K$ and $M$.
Still, orientability is necessary for the discussion of combinatorial covariant derivatives in \S
\ref{comb} as well as in the general setup of the quantum field theoretic framework in \S \ref{framework}.

The paper is structured as follows. We first provide in \S \ref{BFK} an overview over the fundamental
elements of spectral geometry and the Mayer-Vietoris formula for zeta-determinants by
Burghelea, Friedlander and Kappeler in the special case of scalar Laplace-Beltrami operators.
With continue in \S \ref{comb} with a detailed construction of the combinatorial vector bundles, connections
and discrete Laplacians, for the moment independent from the possibly underlying Riemannian
geometry. In \S \ref{Dodziuk} we discuss the approximation theory of Dodziuk.
\S \ref{DN} introduces the classical scalar free theory.
Its critical point defines a Poisson and subsequently the Dirichlet-to-Neumann operators.
We define the combinatorial Gaussian quantum field theory and prove the combinatorial
gluing formula in \S \ref{QFT}. The last \S \ref{approximation} establishes a link between the
determinant gluing identities for discrete Laplacians and the Mayer-Vietoris formula by Burghelea,
Friedlander and Kappeler. We conclude with an outlook of various research directions which are interesting
in relation to the present analysis.

{\bf Acknowledgements.} Both authors would like to
thank Alberto Cattaneo, Theo Johnson-Freyd, John Lott, Rafe Mazzeo, Pavel Mnev, Peter Teichner and Ananth Sridhar for important discussions at various stages of this work, with special thanks
to Werner M\"uller and Dennis Sullivan for valuable input. 
B.V. also thanks University of California at Berkeley for hospitality and
gratefully acknowledges financial support of the Hausdorff
Research Institute of Mathematics in Bonn. N.R. acknowledges support from the Chern-Simons chair, from
the NSF grant DMS-1201391 and the University of Amsterdam where important part of the work
was done. The work was completed when both authors attended the thematic program "Modern Trends in TQFT"
at the Erwin Schr\"odinger Institute in Vienna.

\tableofcontents

\section{Analytic Laplacians and Mayer-Vietoris formula \\ 
for their $\zeta$-regularized determinants}
\label{BFK}
Here we will recall basic facts about differential forms on Riemannian manifolds,
Laplace operators, and Mayer-Vietoris type formulae for $\zeta$-regularized determinants
of Laplace operators. We assume orientability in some subsections for simplicity. For non-orientable
spaces the arguments carry over after twisting with the orientation density bundle.

\subsection{Differential forms on Riemannian manifolds}

\subsubsection{The twisted de Rham complex}
Let $M$ be a compact smooth manifold with boundary $\partial M$.
Let $E$ be a Hermitian vector bundle over $M$ with finite dimensional fibers, with Hermitian metric $h$ and a flat connection $A$ on it. Define $\Omega^\bu(M,E)$ as the space of $E$-valued
differential forms and  $\Omega^\bu_c(M,E)$ as the subspace of smooth differential forms with
compact support away from $\partial M$. Let $(\Omega^\bu_c(M,E),d)$ be the twisted
de Rham complex, where $d$ is the twisted (by the flat connection $A$) de Rham differential
$$
d_q:\Omega^{q}_c(M,E)\to \Omega^{q+1}_c(M,E).
$$
Locally, over a neighborhood $U$, the connection $A$ is a 1-form with coefficients in $End(V)$
and the twisted de Rham differential acts on forms as
$$
d\omega=d_{dR}\omega+A\wedge \omega,
$$
where $d_{dR}$ is the de Rham differential acting of $\Omega^\bu(U, V)$. 
The flatness of $A$ means $d^2=0$.
Locally, this is equivalent to $d_{dR}A+A\wedge A=0$. Each summand
in the right hand side of this formula, in general, is defined only locally. The twisted differential
however, is defined globally and this is why we need a flat connection on $E$ if we want to have globally defined differential on $\Omega^\bullet(M,E)$.

\subsubsection{The scalar product}
Now assume $(M,g)$ is an oriented Riemanian manifold of dimension $n$.
Let $*: \Omega^q(M,E)\to \Omega^{n-q}(M,E)$ be the Hodge star operator,
which yields a natural scalar product on the space $\Omega^\bu(M,E)$
\begin{equation}\label{sc-prod}
\langle\omega, \omega'\rangle_M=\int_M (\omega,\wedge *\omega'),
\end{equation}
where $(\cdot , \cdot)$ is the fiberwise Hermitian product.

With respect to the Riemannian metric on $M$, each $q-$form $\w \in \Omega^\bu(M,E)$
has a natural decomposition into its normal and tangential components
\footnote{ If $\iota: \pa M\to M$ is the natural inclusion of the boundary mapping,
we have $\omega_{\textup{tan}}=\iota^*(\omega)$ and $\w_{\textup{norm}}=*_\pa\iota^*(*\w)$
where $*$ is the Hodge star operation on $M$ and $*_\pa$ is the Hodge star operation on $\pa M$
corresponding to the metric induced from $M$. Let $\nu$ be a unit normal vector field on $\pa M$
which is positive with respect to the orientation on $M$. Then $\omega_{norm}=\iota^*(\iota_\nu\omega)$
where $\iota_\nu\omega$ is the contraction of $\nu$ and $\omega$.}
$
\w = \nu^* \wedge \w_{\textup{norm}} + \w_{\textup{tan}}
$
at any point of the boundary $\partial M$.

The Stokes formula applied to $d(\omega\wedge *\omega')$ gives the integration by parts formula
\begin{align}\label{stokes1}
\langle d \omega,  \omega' \rangle_M = \langle \omega, d^\vee \omega' \rangle_M + \langle \omega_{\tan}, \omega'_{\textup{norm}} \rangle_{\pa M}.
\end{align}
Here $d^\vee$ is the formal adjoint of $d$ with respect to the scalar product (\ref{sc-prod}).
In terms of the Hodge $*$ operation
$d^\vee\omega=(-1)^{nq+n+1}*d*\omega$ when $\omega\in \Omega^q(M,E)$.

\subsubsection{Laplace operators}
Here we recall the basic notions related to Laplace operators. Laplacians are defined by
\[
\Delta^q=d_{q}d^\vee_{q+1}+d^\vee_{q}d_{q-1}:
\Omega^q_c(M,E) \to \Omega^q_c(M,E).
\]

We regard $d_q$ and $\Delta^q$ as unbounded operators in
$L^2\Omega^\bu(M,E)$ with domain $\Omega^\bu_c(M,E)$.
Recall that the maximal extension $d_{q,\max}$ of $d_q$ is a linear operator on
$L^2\Omega^q(M,E)$ with domain
$$
\dom (d_{q,\max})= \{\w \in L^2\Omega^q(M,E) \mid d_q \w \in L^2\Omega^{q+1}(M,E)\},
$$
This is the space of forms $\omega$ from $L^2\Omega^q(M,E)$  such that the differential $d_q \w$
is not just a distribution but actually a form in
$L^2\Omega^{q+1}(M,E)$.

The minimal extension $d_{q,\min}$ of $d_q$ with domain
$\dom (d_{q,\min}) \subset \dom (d_{q,\max})$ is the
graph closure\footnote{Recall that the graph closure in our case is the
closure with respect to the metric $||\omega||^2=\langle\omega,\omega\rangle+
\langle d\omega, d\omega\rangle$.}
of $d_q$ on $\Omega^q_c(M,E)$. Ideal boundary conditions for the
de Rham complex $(\Omega^\bu(M,E),d)$ is a choice
of closed extensions $D_q$ of $d_q$ for each $q=0,...,\dim M$ with
\[
\dom (d_{q,\min}) \subseteq \dom (D_q) \subseteq \dom (d_{q,\max}),
\]
such that $D_q:\dom (D_q) \to \dom (D_{q+1})$
and $D_{q+1} \circ D_q =0$. Such boundary conditions combine into a Hilbert complex in the sense of
\cite{BruLes:HK}. Ideal boundary conditions for the de Rham complex
induce a self-adjoint extension $(D^*_qD_q + D_{q-1}D_{q-1}^*)$ for each $\Delta^q$.

Two special cases of relative and absolute boundary conditions correspond
to minimal and maximal extensions of $d$
\begin{equation}
\begin{split}
\Delta^q_{\textup{rel}} &=
d^*_{q,\min}d_{q,\min} + d_{q-1,\min}d^*_{q-1,\min}, \\
\Delta^q_{\textup{abs}} &=
d^*_{q,\max}d_{q,\max} + d_{q-1,\max}d^*_{q-1,\max}. \\
\end{split}
\end{equation}
Explicitly, these boundary conditions are given as follows.
The \emph{relative self-adjoint extension} $\Delta^q_{\textup{rel}}$
can be defined as the closure in $L^2\Omega^q(M,E)$ of the action
of $\Delta^q$ on forms satisfying \emph{relative} (or Dirichlet) boundary conditions
\begin{align*}
\w_{\textup{tan}}=0 \ \textup{and}
 \ (d^\vee_{q-1} \w)_{\textup{tan}}=0 \ \textup{at}  \ \partial M.
\end{align*}

The \emph{absolute self-adjojnt extension} $\Delta^q_{\textup{abs}}$
is given by the closure in $L^2\Omega^q(M,E)$ of the action
of $\Delta^q$ on forms satisfying \emph{absolute} (or Neumann) boundary conditions
$\w_{\textup{norm}}=0$ and $(d_{q} \w)_{\textup{norm}}=0$ at $\partial M$.

In this paper we will focus on Laplace Beltrami operators
with Dirichlet (relative) boundary conditions in degree $q=0$.

\subsection{Dirichlet-to-Neumann operator}
Let $M$ be a Riemannian, smooth, oriented manifold with boundary $\pa M$.
Denote by $\Delta_M$ the Hodge Laplace operator on $\Omega^0(M,E)=C^\infty(M,E)$.
It is well known that for each $\eta\in C^\infty(\partial M,E)$ the Dirichlet boundary problem
\begin{align}
\Delta_M\phi = 0, \quad \phi|_N=\eta.
\end{align}
has unique solution.  Denote by $P_M: C^\infty(\pa M; E)\to C^\infty(M; E)$
the corresponding Poisson operator. In terms of $P_M$ the unique solution
to the Dirichlet problem is $\phi=P_M\eta$.

The Dirichlet-to-Neumann operator $R^M$ may be defined implicitly as a linear map
such that for any $\eta, \eta' \in C^\infty(\partial M,E)$
\begin{align}\label{RM}
\langle d P_M \eta , d P_M \eta' \rangle = \langle R^M \eta, \eta' \rangle_{L^2(\partial M, E)}.
\end{align}
Explicitly, this leads by \eqref{stokes1} to the following expression
\[
R^M=\pa_\nu P_M: C^\infty(\pa M,E)\to C^\infty(\pa M,E),
\]
where for any $\psi \in C^\infty(M,E)$ we write $\pa_\nu \psi := \iota^* (\iota_\nu d \psi)\in C^\infty(\pa M,E)$,
with $\iota: \pa M \hookrightarrow M$ denoting the natural inclusion, and $\iota_\nu$ the contraction with the unit
normal vector field on $\pa M$.

Recall that it is called the Dirichlet-to-Neumann map because the solution to the
Neumann problem $\Delta_M\psi=0, \ \ \pa_\nu \psi=\xi$ can be written
as $\psi=P_M\eta$ where $\xi=R^M\eta$. In other words, $R^M$ maps Dirichlet boundary data to
the Neumann one.

\subsection{$\zeta$-regularized determinants}
Fix Dirichlet boundary conditions for $\Delta_M$, and denote the resulting
self-adjoint extension again by the same letter for the moment.
It is known that $\Delta_M$ has non-negative pure point spectrum and  that for every $t>0$,
$\exp(-t\Delta_M)$ is a trace class operator with an asymptotic expansion
\begin{align}\label{heatasymp}
\textup{Tr}\, \left(e^{-t\Delta_M} \right)\sim
\sum_{j=0}^{\infty}A_k (\sqrt{t})^{-\dim M+j}, \quad t\to 0+.
\end{align}

Let $\{\lambda_j\}_{j\in \N}$ denote the eigenvalues of $\Delta_M$.
It follows from \eqref{heatasymp} that Weyl's law holds for the counting
function of the eigenvalues. This implies that the zeta function
\[
\zeta(s;\Delta_M):=\sum_{\lambda_j \neq 0}\lambda_j^{-s}
\]
converges in the half-plane $\Re(s)>\dim M/2$ and it can be expressed in terms of
the trace of the heat operator by
\begin{align}
\zeta(s, \Delta_M)=\frac{1}{\Gamma(s)}\int_0^{\infty}t^{s-1}
\textup{Tr}(e^{-t\Delta_q}-P_q) \;dt, \quad \textup{Re}(s)>\dim M/2,
\end{align}
where $P_q$ denotes the orthogonal projection
of $L^2(M,E)$ onto $\ker \Delta_M$.
Then the asymptotic expansion \eqref{heatasymp} yields the meromorphic
extension of the right hand side and hence, of the zeta function, to the
whole complex plane $\C$. Furthermore it also follows from \eqref{heatasymp} that
$\zeta (s, \Delta_M)$ is regular at $s=0$. Hence we can define the
$\zeta$-regularized determinant of $\Delta_M$ by the following expression
\begin{align*}
 \det\nolimits_\zeta \Delta_M := \exp \left( - \left. \frac{d}{ds}\right|_{s=0}
  \zeta(s,\Delta_M)\right).
\end{align*}
This discussion extends to other discrete operators with spectrum bounded
from below, trace class heat operator and an asymptotic expansion
of the form \eqref{heatasymp}. In the finite dimensional case this defines exactly
the determinant of a given self-adjoint linear operator.

\subsection{Mayer-Vietoris type formula for $\zeta$-regularized determinants}

Let $(M,g)$ be a compact Riemannian manifold
with boundary $\partial M$ that consists of three disjoint boundary components
$N_1,N_2$ and $N_3$. Consider a flat Hermitian vector bundle $(E,h, \A)$, i.e. 
Hermitian bundle with a Hermitian flat connection $\A$. Flatness implies product 
structure over a collar neighborhood of $\partial M$. We denote by $\Delta_{M}$ 
the Laplace Beltrami operator acting of functions on $M$ with values in $E$ and 
with Dirichlet boundary conditions at the boundary.

Assume that $g$ is product near $N_1$ and $N_2$. Let $N_1$ and $N_2$
be identified by an isometry $f$, and denote by $M_f$ the Riemannian manifold obtained
from $M$ by gluing $N_1$ onto $N_2$. The flat Hermitian vector bundle $(E,h, \A)$
induces a smooth flat vector bundle over $M_f$.
We write $\Delta_{M_f}$ for the twisted Laplacian on $M_f$, with Dirichlet boundary conditions at $N_3$\footnote{Dirichlet
boundary conditions at $N_3$ may be replaced be any elliptic boundary conditions for $\Delta_{M}$ and $\Delta_{M_f}$.}.

Let $P_M$ be the Poisson operator solving the Dirichlet problem on $M$.
The Dirichlet-to-Neumann operator is this setting differs from \eqref{RM} and is defined as
\[
\mathscr{R}_a(M_f,N_2)=(\pa_{\nu_1} + \pa_{\nu_2}) P_M.
\]
Here $\pa_{\nu_i},i=1,2,$ denotes the normal derivative at $N_i$ in the inward direction.
The operators $\mathscr{R}_a(M_f,N_2)$ and $R^M$ are related by the following
identity
$$\langle \mathscr{R}_a(M_f,N_2) \eta, \eta \rangle_{N_2} =
\langle R^M (f^*\eta, \eta), (f^*\eta, \eta) \rangle_{N_1 \sqcup N_2}.$$
\begin{thm}\label{BFK-theorem}\cite{BFK:MVT}
There exists an explicitly determined constant $\mathscr{C}_{M_f,N_2}\in \R$ such that
$$\frac{\det\nolimits_\zeta \Delta_{M_f}}
{\det\nolimits_\zeta \Delta_{M}}
= \mathscr{C}_{M_f,N_2} \det\nolimits_\zeta \mathscr{R}_a(M_f,N_2).$$
\end{thm}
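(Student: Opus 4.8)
The plan is to follow the variational strategy of Burghelea, Friedlander and Kappeler, reducing the multiplicative statement about $\zeta$-determinants to an additive statement about traces by differentiating in an auxiliary spectral parameter. Concretely, I would introduce the one-parameter families $\Delta_M + \lambda$ and $\Delta_{M_f} + \lambda$ for $\lambda$ in a sector of $\CC$ avoiding the non-negative spectrum, together with the $\lambda$-dependent Dirichlet-to-Neumann operator $\mathscr{R}_a(\lambda) = (\partial_{\nu_1} + \partial_{\nu_2}) P_{M,\lambda}$, where $P_{M,\lambda}$ is the Poisson operator solving $(\Delta_M + \lambda)\phi = 0$ with prescribed Dirichlet data on $N_1 \sqcup N_2$, so that $\mathscr{R}_a(0) = \mathscr{R}_a(M_f,N_2)$. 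For $\lambda$ large, both $\Delta_M + \lambda$ and $\Delta_{M_f} + \lambda$ are invertible and have no kernel, so the whole identity becomes one between honest $\zeta$-determinants of invertible operators; the asserted statement is then recovered by letting $\lambda \to 0$ and replacing $\det\nolimits_\zeta$ by the corresponding primed quantities where zero modes appear.

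The geometric heart is a surgery formula for the resolvent. Cutting $M_f$ along the glued hypersurface $\Gamma \cong N_1 \cong N_2$ returns the manifold $M$, so a section on $M_f$ is precisely a section on $M$ whose boundary values on $N_1$ and $N_2$ agree under $f$ and whose inward normal derivatives match across $\Gamma$. Encoding these two transmission conditions through the Poisson operator yields a Krein-type identity
\begin{equation}\label{eq:surgery}
(\Delta_{M_f} + \lambda)^{-1} = (\Delta_{M} + \lambda)^{-1} - P_{M,\lambda}\, \mathscr{R}_a(\lambda)^{-1}\, P_{M,\lambda}^{*},
\end{equation}
in which the first term accounts for the Dirichlet solution on the cut manifold and the second term corrects the normal-derivative mismatch across $\Gamma$ by inverting the Dirichlet-to-Neumann operator. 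Since $\mathscr{R}_a(\lambda)$ is an elliptic pseudodifferential operator of order one on the closed manifold $\Gamma$, it is invertible for $\lambda$ in the chosen sector and possesses its own $\zeta$-determinant.

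Differentiating the logarithm of the proposed identity in $\lambda$ and using the variation formula $\partial_\lambda \log\det\nolimits_\zeta(\Delta + \lambda) = \mathrm{Tr}_\zeta(\Delta + \lambda)^{-1}$ for the $\zeta$-regularized trace, the statement reduces to the trace identity
\begin{equation}\label{eq:tracered}
\mathrm{Tr}_\zeta (\Delta_{M_f} + \lambda)^{-1} - \mathrm{Tr}_\zeta (\Delta_{M} + \lambda)^{-1} = \mathrm{Tr}_\zeta\bigl(\mathscr{R}_a(\lambda)^{-1}\, \partial_\lambda \mathscr{R}_a(\lambda)\bigr) + a(\lambda).
\end{equation}
Taking the $\zeta$-regularized trace in \eqref{eq:surgery}, the left-hand side equals $\mp\,\mathrm{Tr}\bigl(P_{M,\lambda}\mathscr{R}_a(\lambda)^{-1}P_{M,\lambda}^{*}\bigr)$, and the classical identity $\partial_\lambda \mathscr{R}_a(\lambda) = \pm\,P_{M,\lambda}^{*}P_{M,\lambda}$ together with cyclicity of the trace identifies the principal part of this expression with the first term on the right of \eqref{eq:tracered}. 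The residual term $a(\lambda)$ is the discrepancy between the regularized and the naive traces; by the heat expansion \eqref{heatasymp} it is \emph{local}, computable from the short-time heat coefficients in a collar of $\Gamma$, and its integral in $\lambda$ produces $\log \mathscr{C}_{M_f,N_2}$. The product structure of $g$, $h$ and $\A$ near $N_1,N_2$ guarantees that this constant depends only on the geometry in a neighborhood of $\Gamma$.

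The main obstacle is precisely the non-locality and multiplicative anomaly of the $\zeta$-determinant: naive multiplicativity $\det_\zeta(AB) = \det_\zeta A \, \det_\zeta B$ fails, and the entire point of the variational argument is to isolate this failure as the local constant $a(\lambda)$, hence $\mathscr{C}_{M_f,N_2}$. Making \eqref{eq:tracered} rigorous requires controlling the parametrix of $\mathscr{R}_a(\lambda)^{-1}$ uniformly in $\lambda$, justifying the regularized-trace calculus for operators that are only trace class after subtraction (the resolvents themselves are not trace class for $\dim M \geq 2$, and the boundary operator $\mathscr{R}_a^{-1}\partial_\lambda\mathscr{R}_a$ is of order $-2$ on $\Gamma$), and extracting the constant term from the large-$\lambda$ symbol asymptotics on $\Gamma$. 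The final, delicate step is the passage $\lambda \to 0$: the Dirichlet problems on $M$ and $M_f$ acquire kernels, and one must track the vanishing eigenvalues through the limit so that $\det\nolimits_\zeta$ on both sides can be consistently replaced by $\det'$ without altering $\mathscr{C}_{M_f,N_2}$.
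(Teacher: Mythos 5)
The paper does not prove this statement at all: it is quoted verbatim from Burghelea--Friedlander--Kappeler \cite{BFK:MVT} (with the constant later identified by Lee \cite{Lee:BFK}), so there is no internal proof to compare against. Measured instead against the original BFK argument, your outline follows essentially the same route: the deformation $\Delta+\lambda$, the Krein-type resolvent surgery formula across the cut hypersurface, the identity $\partial_\lambda \mathscr{R}_a(\lambda)=P_{M,\lambda}^{*}P_{M,\lambda}$, and the reduction of the multiplicative statement to a trace identity via $\partial_\lambda\log\det\nolimits_\zeta$. Two points deserve correction. First, the sign in your surgery formula should be $+$ rather than $-$: imposing Dirichlet conditions on the cut decreases the Green's function, and $P_{M,\lambda}\mathscr{R}_a(\lambda)^{-1}P_{M,\lambda}^{*}$ is nonnegative, so $(\Delta_{M_f}+\lambda)^{-1}-(\Delta_M+\lambda)^{-1}$ equals $+P_{M,\lambda}\mathscr{R}_a(\lambda)^{-1}P_{M,\lambda}^{*}$. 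Second, and more substantively, your account of where $\mathscr{C}_{M_f,N_2}$ comes from is not how the argument actually closes: in BFK the logarithmic derivative identity
\[
\frac{d}{d\lambda}\Bigl[\log\det\nolimits_\zeta(\Delta_{M_f}+\lambda)-\log\det\nolimits_\zeta(\Delta_M+\lambda)-\log\det\nolimits_\zeta\mathscr{R}_a(\lambda)\Bigr]=0
\]
holds exactly (the residues of the relevant zeta functions at $s=1$ conspire so that the difference of regularized traces is the honest trace of the trace-class resolvent difference), so the bracketed quantity is a genuine constant in $\lambda$; the constant is then evaluated not by integrating a $\lambda$-dependent anomaly $a(\lambda)$ but by sending $\lambda\to+\infty$, where the two determinant terms decay and $\log\det\nolimits_\zeta\mathscr{R}_a(\lambda)$ has a symbol-asymptotic expansion on the closed hypersurface whose constant term is $-\log\mathscr{C}_{M_f,N_2}$. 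This is what makes $\mathscr{C}$ local and computable from the homogeneous symbol components of $\mathscr{R}_a(\lambda)$, as exploited by Lee. Your treatment of the $\lambda\to 0$ limit and the zero modes is the right remaining concern and is handled in \cite{BFK:MVT} by factoring out the finitely many small eigenvalues.
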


This is an example of Mayer-Vietoris type formulae obtained by Burghelea, Friedlander and Kappeler
for elliptic differential operators of possibly higher order.
An explicit formula for $\mathscr{C}_{M_f,N_2}$ has been obtained
by Yoonweon Lee \cite{Lee:BFK}.

\section{Combinatorial vector bundles and Laplacians} \label{comb}

In the present section we provide a purely combinatorial definition 
of vector bundles over simplicial complexes and introduce a notion of a
(combinatorial) connection, its curvature and the corresponding 
covariant derivative. These constructions lead to the definition of 
a combinatorial Hodge Laplacian, where the classical discretization of the 
analytic Hodge Laplacian, see \cite{Dod:FDA} and \cite{Mue:ATA}, is included as a 
special case. 

The main novelty of the presented discussion is the introduction
of the necessary combinatorial concepts without reference to the Riemannian geometry,
in contrast to the classical references \cite{DodPat}, \cite{Whi:GIT}

\subsection{Vector bundles and connections on a simplicial complex}

\subsubsection{Vector bundles} Throughout this section we assume that
$K$ an oriented $n$-dimensional simplicial complex.
We denote the set of $q$-simplices by $K_q$ and for each $\sigma \in K_q$ we will write $\partial \sigma$
for $(q-1)$-simplices forming its boundary.
\begin{definition} A vector bundle $E$ over a simplicial complex $K$ with the
fibre $V$ is a triple $(E, \pi, K)$ where $E$ is the total space of the bundle, $\pi: E\to K$
is a projection such that for each simplex $\sigma\in K$
\[
\quad E_{\sigma} = \pi^{-1}(\sigma) \cong V.
\]
as a vector space. The space $E_\sigma$ is called the fiber of $\pi$ over $\sigma$.
\end{definition}

Any vector bundle over a simplicial complex $K$ is trivializable: a choice of a linear isomorphism $\xi_\sigma: E_\sigma\simeq V$
for each $\sigma\in K$ induces an isomorphism $E\simeq V\times K$ and brings $\pi$ to the
natural projection $V\times K\to K$.

As usual, a section of $E$ is a map $s: K\to E$, such that $s\cdot\pi=\textup{id}_K$.
Denote the space of sections by $C^\bullet(K,E)$.
A trivialization of $E$ identifies the  space of sections $C^\bullet(K,E)$ with the space of maps $K\to V$.
The space $C^\bullet(K,E)$  is a vector space with $(\phi_1+\phi_2)(\sigma)=
\phi_1(\sigma)+\phi_2(\sigma) \in E_\sigma$. It has a natural $\Z$-grading with $C^q(K,E)$
being the space of sections $\phi: K_q\to E$ and
co-chain complex structure with the differential
\begin{align}
\left( d \phi \right) (\tau) = \sum_{\sigma \in \partial \tau}
(-1)^{(\tau, \sigma)}  \phi(\sigma),
\end{align}
Here $\tau\in K_{q+1}$ and $(-1)^{(\tau, \sigma)}$ is plus when orientations of $\sigma$ and
$\tau$ agree and it is minus when they are opposite. The identity $d^2=0$ is proven below.

This cochain complex should be regarded as a discrete version of the
de Rham complex on a trivial vector bundle with the trivial flat connection.

\subsubsection{The double of a simplicial complex}
Let $K^\vee$ be the complex dual to $K$. By definition, there is
a bijective correspondence $j: K_q \to K^\vee_{n-q}$. Moreover,
the boundary operator $\partial$ on $K$ defines a coboundary mapping
$\pa^\vee$ on $K^\vee$, such that for each $\sigma \in K_q$ and the cochain $(j\sigma)^*$
dual to $j\sigma \in K^\vee_{n-q}$, we find  $$\pa^\vee (j\sigma)^* = (j\partial \sigma)^*.$$

Geometrically, $0$-simplices of $K^\vee$ can be identified
with a choice of a point in the interior of each $n$-simplex of $K$ (we will call such points
centers of $n$-simplices), $1$-simplices of $K^\vee$
can be identified with a choice of a point on each $(n-1)$-simplex of $K$ (centers of $(n-1)$-simplices) and connecting
it with the center of each adjacent $n$-face by a segment, etc. We will denote a center of $\sigma$ by $p_\sigma$.

Clearly, a particular choice of the centers $p_\sigma$ for each $\sigma$
is irrelevant in the combinatorial picture and becomes of interest only
when the simplicial complex is used to approximate smooth structures.

Define the simplicial complex $D(K)$ which we will call the \emph{double of} $K$ as follows.
For each $\sigma\in K$ choose a center $p_\sigma\in \textup{int}(\sigma)$ of $\sigma$.
Add new edges to $K$ which connect the center of $\sigma$
with centers of $\tau\in \pa \sigma$, i.e. $p_\sigma \in K^\vee$. Subdivide each $2$-simplex by new edges connecting its
center to centers of boundary edges. For each 3-simplex add 2-simplices whose boundary
consists of new edges connecting the center of the 3-simplex to centers of its boundary 2-simplices and
new edges connecting centers of boundary 2-simplices to centers of their boundary etc. Proceeding
iteratively in higher degrees defines the double simplicial complex $D(K)$.

Note for example that by construction, each such 2-simplex is a quadrilateral with one vertex
$p_\eta$ being the center of some $\eta\in K_3$, two others $p_{\tau^\pm}$ being
centers of $\tau^\pm\in \pa \eta$, and the other $p_\sigma$ where $\sigma\in \pa\tau^\pm$.
Moreover all simplices of $D(K)$ are cubic and thus, $D(K)$ is an example of a cubic simplicial complex.

Note that vertices ($0$-simplices) of $D(K)$ are in natural bijection with simplices of $K$.
An example of the double of a 2-dimensional simplicial complex is shown on Fig. 1,
where the simplices of the double complex are bold.

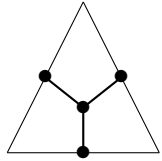
\begin{figure}[h]
\begin{center}
\begin{tikzpicture}
\draw (-1,0) -- (1,0);
\draw(-1,0) -- (0,2);
\draw (1,0) -- (0,2);

\node at (0,0) {$\bullet$};
\node at (-0.5,1) {$\bullet$};
\node at (0.5,1) {$\bullet$};
\node at (0,0.6) {$\bullet$};

\draw[thick] (0,0.6) -- (0,0);
\draw[thick] (0,0.6) -- (0.5,1);
\draw[thick] (0,0.6) -- (-0.5,1);

\end{tikzpicture}
\end{center}
\label{double}
\caption{The double simplicial complex $D(K)$ of a triangle.}
\end{figure}

\subsubsection{Connections on vector bundles} Fix a bijection between simplices of $K$ and vertices of
$D(K)$. Then a vector bundle over $K$ becomes a vector bundle over the set of vertices of $D(K)$.
Naturally any such vector bundle is trivial and isomorphic to $V(D(K))\times V$ where $V(D(K))$
is the set of vertices of $D(K)$ and $V$ is the vector space isomorphic to a fiber.

\begin{definition} A connection $\A$ on a vector bundle $E$ over $K$ is the collection of \emph{parallel transports},
i.e. it assigns a linear isomorphism  of fibers $\A (\gamma)\equiv \A(\tau, \sigma): 
E_\sigma \to E_\tau $ to each edge path $\gamma\subset D(K)$ connecting 
simplices $\tau, \sigma \in K$.
\end{definition}

\begin{definition} The vector bundle $E$ is Hermitian if each fiber is equipped with
a Hermitian scalar product.
A Hermitian connection is the collection of parallel transports which are
unitary isomorphisms of fibers.
\end{definition}

\begin{definition} Given a connection $\alpha$, its combinatorial \emph{covariant derivative}
$d_\A: C^q(K,E) \to C^{q+1}(K,E)$
is:
\begin{align}
\left( d_\A \phi \right) (\tau) = \sum_{\sigma \in \partial \tau}
(-1)^{(\tau, \sigma)}\A(\tau, \sigma) \phi(\sigma).
\end{align}
Here $\phi \in C^q(K,E)$, $\tau\in K^{q+1}$, $(\tau, \sigma)=0$, if the orientation of
$\sigma$ coincides with the orientation induced by $\tau$ on its boundary
and $(\tau, \sigma)=1$ if these orientations are opposite.
\end{definition}

Consider four simplices $\eta\in K_{q+2}$, $\tau_\pm\in K_{q+1}$ and $\sigma \in K_q$
such that $\tau_\pm\in \pa \eta$ and $\sigma\in \pa \tau_\pm$. For given $\eta$ we will call such
such quadruples $(\eta, \tau_\pm, \sigma)$ \emph{leaves} of $\eta$ and will denote the set of \emph{leaves of} $\eta$
by $L(\eta)$. We identify each leaf with the corresponding $\sigma \in \partial \partial \eta$
and write $\sigma \in L(\eta)$. A choice of such $(\sigma, \eta)$ determines the associated
simplices $\tau^\pm$. Let us compute $d_\A^2$. For  $\phi \in C^q(K,E)$ and $\eta \in K_{q+2}$ we have
\begin{align*}
(d_\A d_\A \phi) (\eta) &= \sum_{\tau \in \partial \eta}
\sum_{\sigma \in \partial \tau} (-1)^{(\tau, \eta) + (\sigma, \tau)}
\A (\eta, \tau)  \A (\tau, \sigma) \phi(\sigma)
\\ &= \sum_{\sigma\in L(\eta)} \left(
(-1)^{(\tau^+, \eta) + (\sigma, \tau^+)}
\A  (\eta, \tau^+) \A ( \tau^+, \sigma)\right. \\
&+ \left. (-1)^{(\tau^-, \eta) + (\sigma, \tau^-)}
\A  (\eta, \tau^-)  \A ( \tau^-, \sigma)\right)
\phi(\sigma)\\ &=\sum_{\sigma\in L(\eta)} (-1)^{(\sigma,\eta)}F(\sigma,\eta)\phi(\sigma),
\end{align*}
where we introduce the \emph{curvature} of a connection as the collection of elements
\begin{equation}\label{curva}
F_\A(\sigma, \eta)=\A  (\eta, \tau^+)  \A ( \tau^+,\sigma)
- \A  (\eta,\tau^-)  \A ( \tau^-,\sigma)
\end{equation}
assigned to each quadrilateral of $D(k)$.
In the last step of the computation we used the identity
\begin{align*}
(-1)^{(\tau^+, \eta) + (\sigma, \tau^+)} +
 (-1)^{(\tau^-, \eta) + (\sigma, \tau^-)} = 0.
\end{align*}

\begin{definition} The  \emph{curvature of} connection $\A$ on the leaf of $\eta$ containing $\sigma$
is defined by $F_\A(\sigma, \eta) \in \textup{Hom}(E_\sigma, E_\eta)$.
\end{definition}

\begin{defn}\label{flatness}
The combinatorial connection $\A$ is \emph{flat} if its curvature is zero for
each pair $(\sigma\in L(\eta), \eta)$, i.e. if
\[
\A  (\eta, \tau^+)  \A ( \tau^+,\sigma)
= \A  (\eta,\tau^-)  \A ( \tau^-,\sigma)
\]
for each such pair $(\sigma, \eta)$.
\end{defn}

The following is now clear:
\begin{prop}
The combinatorial connection is flat, if and only if $$d_\A^2= 0.$$
\end{prop}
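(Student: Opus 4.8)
The plan is to derive both implications immediately from the formula for the square of the covariant derivative computed just above the statement, namely
\[
(d_\A^2 \phi)(\eta) = \sum_{\sigma \in L(\eta)} (-1)^{(\sigma, \eta)}\, F_\A(\sigma, \eta)\, \phi(\sigma),
\]
which holds for every $q$, every $\phi \in C^q(K,E)$, and every $\eta \in K_{q+2}$. Granting this identity, no further computation is required: the proposition reduces to a point of linear algebra, so I would spend no effort re-deriving the sign cancellation and instead treat the displayed formula as the sole input.

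For the forward implication I would simply invoke Definition \ref{flatness}: if $\A$ is flat then $F_\A(\sigma,\eta)=0$ for every pair $(\sigma \in L(\eta),\eta)$, so each summand above vanishes and therefore $(d_\A^2\phi)(\eta)=0$ for all $\eta$ and all $\phi$, i.e. $d_\A^2=0$.

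The converse is the only direction that carries content, and my strategy is to isolate a single summand by testing against a cochain concentrated on one leaf. Assuming $d_\A^2=0$, fix $\eta \in K_{q+2}$ and a leaf $\sigma_0 \in L(\eta)$, and for an arbitrary $v \in E_{\sigma_0}$ choose $\phi \in C^q(K,E)$ with $\phi(\sigma_0)=v$ and $\phi(\sigma)=0$ for all $\sigma \neq \sigma_0$; this is legitimate since a cochain may take independent values on distinct simplices. The displayed formula then collapses to $0 = (-1)^{(\sigma_0,\eta)} F_\A(\sigma_0,\eta)\,v$, and since $v$ is arbitrary we obtain $F_\A(\sigma_0,\eta)=0$. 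As $\eta$ and $\sigma_0$ were arbitrary, the curvature vanishes on every leaf, which is exactly flatness.

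The single point requiring care — and the only place where the combinatorics of the complex enters — is the legitimacy of concentrating $\phi$ on one leaf: one must know that the summation in the $d_\A^2$ formula runs over pairwise distinct $q$-simplices, so that setting $\phi(\sigma_0)=v$ genuinely kills every other term. This is guaranteed by the structure of $L(\eta)$, since each codimension-two face $\sigma \in \partial\partial\eta$ is contained in precisely the two codimension-one faces $\tau^\pm \in \partial\eta$; hence $L(\eta)$ is in bijection with $\partial\partial\eta$ via $(\eta,\tau^\pm,\sigma)\mapsto\sigma$ and the simplices indexing the sum are distinct. Once this is noted, the evaluation maps $\phi\mapsto\phi(\sigma)$ are independent and the equivalence follows with no estimates at all.
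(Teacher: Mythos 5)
Your proposal is correct and follows essentially the same route as the paper: the authors derive the identity $(d_\A^2\phi)(\eta)=\sum_{\sigma\in L(\eta)}(-1)^{(\sigma,\eta)}F_\A(\sigma,\eta)\phi(\sigma)$ and then declare the proposition ``now clear,'' which is exactly your reduction to that formula. Your explicit treatment of the converse via cochains concentrated on a single $\sigma_0$, together with the observation that each $\sigma\in\partial\partial\eta$ determines a unique leaf of $\eta$, merely fills in the details the paper leaves implicit.
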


One of the corollaries  is the identity $d^2=0$ for the non-twisted differential.
A flat combinatorial connection $\A$ defines
a cochain complex $(C^*(K,E), d_\A)$. This complex should be regarded as a
discrete analog of the de Rham complex with coefficients in a local system on $E$.

\subsection{Combinatorial Laplacians.}

\subsubsection{Riemannian structure on simplicial complexes}

A choice of metric on an oriented Riemannian manifold $M$ defines a scalar product on forms, i.e.
an $L_2$-norm on the space of forms. It also defines a scalar product and a norm on forms
on each submanifold of $M$. We will call such norm Riemannian norms.

\begin{definition} A Riemannian norm on an $n$-dimensional simplicial complex $K$ is a mapping
$|\cdot |: K_q\times K_q \to \RR$ for each $q=0,1,\dots, n$, such that the corresponding
bilinear form on $C_*(K, \R)$ is positive definite. The number $|\sigma, \tau|$ will
is called the weight of the pair. A simplicial complex with a Riemannian norm
is called a \emph{metrized} simplicial complex.
\end{definition}

\begin{remark} An embedding of the simplicial complex $K$ to a Riemannian manifold $M$, gives a natural choice
of a Riemannian norm on $K$. It is induced by the metric on $M$ and
by the Whitney forms (see section \ref{WM}). It assigns to each pair $(\sigma, \tau)\in K_q\times K_q$
the weight
\[
|\sigma, \tau|=\int_{M} W(\sigma)\wedge \ast W(\tau).
\]
\end{remark}

\begin{remark} Note that if we want to define in a similar way a discrete analog of metric
on $M$, we should define a similar systems of weights $|\sigma, \tau|_L$ for every subcomplex
$L\subset K$. It is natural to require that the scalar products of
cochains on sub-complexes should
satisfy the natural compatibility condition. If two subcomplexes
$L_1$ and $L_2$ of the same dimension contain a subcomplex $L_3$
of the same or lower dimension, then the systems of weights for $L_1$ and $L_2$
agree on $L_3$ and coincide with the system of weights for $L_3$ induced from $K$. 

\end{remark}

Define a Hermitian metric on a vector bundle $E$
over $K$ as a scalar product $h_\sigma:E_\sigma \times E_\sigma \to \R^+$ on fibres $E_\sigma$.
Such a combinatorial vector bundle, equipped with a Hermitian metric is called the
\emph{Hermitian} vector bundle.

For  a Hermitian vector bundle  $E$ over a metrized simplicial complex $K$ define
the bilinear form $\langle .,.\rangle_K $ on $C^q(K,E)$ as
\begin{align}\label{inner}
\langle \phi, \psi \rangle_K =
\sum_{\sigma \in K_q} \sum_{\tau \in K_q}
h(\phi(\sigma), \A(\sigma, \tau)\psi(\tau)) |\sigma, \tau|.
\end{align}
We also require that this form is positive definite, i.e. that it is a scalar
product on $C^q(K,E)$. The scalar product with $|\sigma, \tau| = |\sigma, \sigma| \delta_{\sigma, \tau}$
is called \emph{diagonal}. In case of a diagonal inner product,
$d_\A^\vee$ can be computed explicitly
\[
(d_\A^\vee \phi)(\sigma) = \sum_{\{\tau \mid \sigma \in \partial \tau\}}
(-1)^{(\tau, \sigma)} \A(\sigma, \tau) \phi(\tau)
\frac{|(\tau, \tau)|}{|\sigma, \sigma|}.
\]
However, scalar products need not be diagonal in general, 
which is highlighted by the presence of holonomy $\A$ in \eqref{inner}. 
For example, one of the geometrically most natural scalar products, the Whitney product is
not diagonal, in which case we recover the general formula \eqref{inner}, 
see \S \ref{whitney-subsubsection} below. 

Recall that a connection $\alpha$ is {\it Hermitian} if for any $v\in E_\sigma$
and $w\in E_\tau$ the parallel transport is given by unitary transformations\footnote{Here we will focus on
real scalar product, i.e. $h(\phi(\sigma), \psi(\sigma))=h(\psi(\sigma), \phi(\sigma))$. In this case
a connection is given by a collection of $h$-orthogonal parallel transports. In the complex Hermitian case
when parallel transport is given by $h$-unitary matrices $h(\phi(\sigma), \psi(\sigma))=\overline{h(\psi(\sigma), \phi(\sigma))}$. 
We will use the term Hermitian for both complex Hermitian and real orthogonal cases.}:
\[
h(\A(\tau, \sigma)v, w) = h(v, \A(\tau, \sigma)^{-1}w).
\]
By definition, a parallel transport along the edge oriented from $\sigma$ to $\tau$ is the inverse
to the one oriented from $\tau$ and $\sigma$: $\A(\sigma, \tau)=\A(\tau, \sigma)^{-1}$. For a Hermitian connection 
$\alpha$ the inner product on $C^\bullet(K,E)$ defines the adjoint covariant derivative
$d^\vee_\alpha: C^q(K,E)\to C^{q-1}(K,E)$ for any $\phi\in C^q(K,E)$ and $\psi \in C^{q-1}(K,E)$, as
\[
\langle d_\A^\vee \phi, \psi \rangle_K := \langle \phi, d_\A \psi \rangle_K.
\]

The associated combinatorial Laplacian is defined as
\begin{align}\label{lapl}
\Delta^K_\A:= d_\A  d_\A^\vee + d_\A^\vee  d_\A: C^q(K,E) \to C^q(K,E).
\end{align}
Note that twisted combinatorial Laplacian are defined for all connections $\A$.

If $\alpha$ is flat, i.e. $d_\alpha^2=0$, then $(d_\alpha^\vee)^2=0$. Such two operators together
with the scalar product give the discrete version of the Hodge structure.  We will
also use the notation $\Delta_K$, whenever the connection $\A$ is fixed.

\subsubsection{Locality}\label{locality} We will say that the Riemannian norm on an $n$-dimensional simplicial complex $K$
is {\it local} if the weight $|\sigma, \tau|$ in the scalar product of $q$-cochains, $q=0,1, \dots, n$ can be non-zero only if
\begin{itemize}
\item there exists $\eta\in K_{q+1}$ such that $\sigma,\tau\subset \pa \eta$,
\item the subcomplex $\overline{\sigma\cup \tau}$ is connected.
\end{itemize}
The locality of the scalar product is equivalent to a Mayer-Vietoris property
for any subcomplex $L$ that consists of three connected components
$L= L_1 \sqcup L_2 \sqcup L_3$, where $L_1$ and $L_2$ are isometrically identified via $f$.
This defines a new chain complex $K_f$ with a Riemannian norm induced from $K$.
Let $\phi, \psi$ be cochains on $K$, such that $\phi|_{L_1}=f^*\phi|_{L_2}$ and $\psi|_{L_1}=f^*\psi|_{L_2}$.
Then the cochains lift to well-defined cochains $\phi_f, \psi_f$ on $K_f$ and the
Mayer-Vietoris property is given by\footnote{The pairing on $L_1$ is defined by restriction from $K$.}
\[
\langle \phi , \psi \rangle_K = \langle \phi_f , \psi_f \rangle_{K_f} + \langle \phi|_{L_1} , \psi|_{L_1} \rangle_{L_1}.
\]

For vertices this means $|\sigma,\tau|\neq 0$ only when $\tau$ is connected to $\sigma$ by an edge.
For edges this means that $\sigma$ and $\tau $ belong to the boundary of some $2$-simplex and
that they share a vertex.

The Whitney scalar product, which we introduce below in 
\S \ref{whitney-subsubsection}, has minimal possible locality in the sense that 
$|\sigma,\tau|$ is non-trivial on all vertices $\sigma$ and $\tau$ connected by an edge. The diagonal scalar product
has maximal locality in the sense that $|\sigma,\tau|$ is trivial whenever $\sigma \neq \tau$.

One of the reasons we define this notion of locality is that it is consistent
with the locality of the classical action for scalar field theory.

\subsubsection{Green's formula.} Assume that an $n$-dimensional simplicial complex 
$K$ contains a subcomplex $L$ with $\dim(L)=n-1$ such that to each $n-1$ simplex
in $L$ belongs to the boundary of only one $n$-dimensional
simplex in $K$. We will call $L$ the \emph{boundary} of $K$. We have a natural 
projection $p: C^\bullet(K,E)\to C^\bullet(L,E)$
which is the restriction of a cochain on $K$ to $L$.

Define co-chains $C^\bullet(K,L,E)$ as
the $\ker(p)\subset C^\bullet(K,E)$. These cochains are discrete analog of differential forms
on a manifold with vanishing pullback to the boundary, i.e. of differential forms producing
relative cohomologies. Given a flat connection $\alpha$ on $E$, it is clear that
the subspace $C^\bullet(K,L;E)=\ker(p)$ is also a subcomplex with the
differential $d_\A$. We have an exact sequence:
\[
0\rightarrow C^\bullet(K,L;E)\rightarrow C^\bullet(K;E)\stackrel{p}{\rightarrow} C^\bullet(L;E)\rightarrow 0
\]
This exact sequence naturally splits because the spaces come with bases enumerated by simplices. Denote
this splitting by $j: C^\bullet(L;E) \rightarrow C^\bullet(K;E)$ .

The scalar product $\langle \cdot , \cdot \rangle_K$ on $C^\bullet(K;E)$
defines the natural scalar product on the subcomplex $C^\bullet (K,L;E)$.
The splitting defines the scalar product on $C^\bullet(L;E)$.
\[
\langle \phi, \psi \rangle_{L}=\langle j(\phi), j(\psi) \rangle_{K}
\]
Define the bilinear form $\langle \phi, \psi \rangle_{K,L}$ on $\phi, \psi \in C^\bullet (K; E)$
as
\[
\langle \phi, \psi \rangle_{K, L}=:\langle \phi, \psi \rangle_{K} -
\langle p(\phi), p(\psi) \rangle_{L}.
\]
when restricted to $C^\bullet (K, L; E)$ it coincides with the natural scalar product
on this space induced from scalar product on cochains on $K$.

There is a natural analog of the
Green formula for $\phi \in C^q(K,E)$ and $\psi \in C^{q+1}(K,E)$:
\[
\langle d_\A\phi, \psi \rangle_{K,L}=\langle \phi, d_\A^\vee\psi \rangle_{K,L}+
\langle p(\phi), \psi_{\textup{norm}} \rangle_{L}.
\]
Here
\[
\langle p(\phi), \psi_{\textup{norm}} \rangle_{L}=\langle p(\phi), p(d_\A^\vee\psi)\rangle_L-\langle p(d_\A\phi), p(\psi)\rangle_L
\]
We will call $\psi_{\textup{norm}}$ the normal component of $\psi$ at the boundary $L$.
This defines the combinatorial Laplacian $\Delta_{K,L}$ on cochains $C^\bullet(K,L;E)$,
as above in \eqref{lapl}. The operator $\Delta_{K,L}$ corresponds to the analytic Hodge Laplacian with relative
boundary conditions. In degree zero, $\Delta_{K,L}$ is simply the restriction of $\Delta_K$
to $C^0(K,L;E)\subset C^0(K;E)$.

\section{The approximation theory of Dodziuk}\label{Dodziuk}

Here we will outline the relation between combinatorial constructions
and the smooth theory.

\subsection{Simplicial approximation of vector bundles}
Here we will focus the question how simplicial complexes with vector bundles
and connections on them can be induced by a  smooth triangulation on a manifold and a
vector bundle on it with a connection. Consider
an oriented compact smooth $M$ and a
vector bundle $E$, with a connection $\A$.

From now on let $K$ be a simplicial complex of a smooth triangulation of $M$,
identified with its embedding in $M$.

Denote by $D(K)$ the double of the simplicial complex $K$. Fix an embedding $D(K)$ in $M$ which makes it a
simplicial decomposition of $M$.  For each $\sigma \in K$ denote
by $p_\sigma$ the point in the interior of $\sigma$ at which $\sigma$
intersects with its dual, recall Figure 1. Define the combinatorial vector bundle $E^c$, with a connection
on it as
\begin{align*}
E^c_\sigma = E_{p_\sigma}, \quad \A^c(\sigma, \tau) = \A(p_\sigma, p_\tau),
\tau \in \partial \sigma.
\end{align*}
where $\A(p_\sigma, p_\tau)$ is the holonomy along the edge of $D(K)$
connecting $p_\sigma$ and $p_\tau$. This is the only place where the actual
choice of centers is relevant for the construction.

When the connection is flat, then in the notation of Definition \ref{flatness} the vertices 
$(p_\eta, p_{\tau^\pm}, p_\sigma)$ form a contractable loop, so that the holonomy
along that loop vanishes and the combinatorial connection $\A^c$ is flat by definition. 
We denote by $(C^\bullet(K,E^c),d_c)$ the associated cochain complex.

Assume that the combinatorial connection is flat and 
recall the definition of the complex $(C^\bullet(K,F_\A),d_F)$ which has been considered in
\cite{Dod:FDA} and \cite{Mue:ATA} in the context of Dodziuk's
approximation theory. Here, $F_\A$ denotes the local system of flat sections.

For a simplex $\sigma\in K$ the \emph{open star}  of $\sigma \in K$, $\textup{St}(\sigma)$,
is defined as the union of $\sigma$ and of interiors of all simplices $\tau \in K$,
which contain $\sigma$ as part of their boundary: $\sigma \in \partial \tau$.
Every vector $v\in E_\sigma$ in the fiber $E_\sigma$ over $p_\sigma$ extends by the parallel
transport $\A$ uniquely to a flat section $\Phi_\A v $ of $E$ over the open star of $\sigma \in K$.

Consequently we can identify each $\phi \in C^q(K,E^c)$ with a
mapping $\Phi_\A \phi$ that assigns to any $\sigma \in K_q$
a flat section $\Phi_\A \phi(\sigma) \in F_\A$. Under such an identification
we find for any $v\otimes \sigma^* \in C^q(K,E^c)$, where $\sigma \in K_q, v\in E_\sigma$
and $\sigma^*$ is the the dual to $\sigma$, that
\begin{align*}
d_c^\Phi \left(\Phi_\A (v\otimes \sigma^*) \right) :=
\Phi_\A \circ d_c \circ \Phi_\A^{-1} \left(\Phi_\A (v\otimes \sigma^*) \right)
= \left( \Phi_\A v \right) \otimes d\sigma^*.
\end{align*}

Thus the identification $\Phi_\A: (C^*(K,E^c),d_c) \to (C^*(K,F_\A),d_c^\Phi)$
is an isomorphism of complexes. We have studied $(C^*(K,E^c),d_c)$
in the previous section only as it bears the advantage of being defined without using analytic data.

\subsection{Simplicial approximation for Laplacians}

\subsubsection{The Whitney map.}\label{WM}

Let $K$ be a smooth CW complex on $M$. Recall the definition of the Whitney map
$$
W: C^\bullet(K,F_\A) \to \Omega^\bullet(M,E),
$$
where $C^\bullet(K,F_\A)$ is defined above. The Whitney map was introduced by Whitney
\cite{Whi:GIT}, see also its use \cite{Dod:FDA, Mue:ATA} in the
spectral approximation theory.

Let $K_0 = \{\sigma_0,..,\sigma_d\}$ be the vertices of $K$.
Every $\sigma_i$ defines a barycentric coordinate function $\mu_i$ on $M$.
Each $\mu_i$ is a continuous function on $M$ with support in the
open star of $\sigma_i$. Moreover, since the triangulation is smooth,
the restriction of $\mu_i$ to any simplex is smooth. Hence, the
differential $d\mu_i\in L^2\Omega^1(M)$ exists in the distributional sense.
Given an $q$-simplex $\tau = [\sigma_{i_0},..,\sigma_{i_q}] \in K_q$,
with ascending sequence $\{i_k\}_k$ of indices, and
some flat section $v\in F_\A$ supported over the open star of $\tau$,
we define the Whitney map by\footnote{Whitney map is a particular choice of a linear spline.}
\begin{align*}
W(v\otimes \tau^*) &= v \otimes \left(q! \sum_{k=0}^q (-1)^k \mu_{i_k}
d\mu_{i_0} \wedge \cdots \widehat{\mu_{i_k}} \cdots \wedge d\mu_{i_q}\right), \ q>0, \\
W(v \otimes \sigma_{i_0}^*) &= v \otimes \mu_{i_0}, \ q=0.
\end{align*}

\subsubsection{The metric structures.}\label{whitney-subsubsection}
The Riemannian structure $g$ on $M$ and the Hermitian structure on $E$ define an $L^2$-structure on $E$-valued
differential forms $\Omega^\bullet(M,E)$. The corresponding
completion is usually denoted by $L^2\Omega^\bullet(M,E)$. Define the Hermitian
metric on the combinatorial vector bundle $E^c$ over $K$ by taking the Hermitian product of
sections over a center $p_\sigma$ of each simplex $\sigma \in K$. We define the
scalar product of any $\phi, \psi \in C^\bullet (K,E^c)$ in terms of the Whitney map as
\begin{align}
\langle \phi, \psi \rangle^W_K := \langle W \Phi_\A \phi, W \Phi_\A \psi \rangle_{L^2}.
\end{align}
Note that for the Whitney scalar product $|(\sigma, \tau)|\neq 0$ only if there
exists a simplex $\eta\in K$ such that $\sigma, \tau \subset \pa \eta$. This is an example of a \emph{local}
scalar product (not to be confused with a diagonal scalar products).

This definition provides an explicit example of our general construction
in \eqref{inner}. Indeed, consider $\phi = v\otimes \sigma^*$ and
$\psi=w\otimes \tau^*$, where $\sigma, \tau \in K_q$ and $v\in E_\sigma, w\in E_\tau$.
Then we have
\begin{align*}
\langle\phi, \psi \rangle^W_K &= 
\langle W \Phi_\A (v\otimes \sigma^*), W \Phi_\A (w\otimes \tau^*)\rangle_{L^2}
= \int_M h(\Phi_\A v, \Phi_\A w) \langle W\sigma^*, W\tau^*\rangle_g \, \textup{dvol}_g \\
&= h_{p_\sigma}(v, \A(p_\sigma, p_\tau) w) \langle W\sigma^*, W\tau^*\rangle_{L^2} 
=: h^c_{p_\sigma}(v, \A(p_\sigma, p_\tau) w)  |(p_\sigma, p_\tau)|.
\end{align*}

\subsubsection{ The approximation theory.}
Let $E$ be a flat Hermitian vector bundle.
The de Rham map $R$ associates to any $f\in \Omega^q(M,E)$
a cochain $Rf \in C^q(K,E^c)$ as follows. In a local neighborhood
of any simplex $\sigma \in K_q$ we fix a basis $\{v_1,..,v_p\}, p=\textup{rank}E,$
of flat sections of $E$ and write $f$ locally as
$f= \sum_{i=1}^p v_i \otimes f_i$, where each $f_i$ is an ordinary differential form.
Then
\begin{align}\label{de-Rham}
Rf(\sigma) := \sum_{i=1}^p v_i(p_\sigma) \int_\sigma f_i.
\end{align}
where $p_\sigma$ is the center of $\sigma$.

The approximation theory of forms by cochains is based on the fact that the composition
of the de Rham and Whitney maps converge to identity when the mesh of
the triangulation $K$ goes to zero under standard subdivisions,
see \cite[Theorem 3.7]{Dod:FDA} and also \cite{Mue:ATA} for twisted setup.
The crucial results is the following
\begin{thm}\cite[Th. 5.30]{Dod:FDA}, \cite[Th. 4.6]{Mue:ATA}
The combinatorial zeta functions $\zeta (s, \Delta^{K(\eta)}_\A)$
converge under standard subdivisions uniformly on compact subsets of
the complex half plane $\textup{Re}(s)>\dim M/2$ to the analytic zeta function
$\zeta(s, \Delta_{\textup{abs}})$, as the mesh $\eta$ of the triangulation $K(\eta)$ goes to zero.
\end{thm}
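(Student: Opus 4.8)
The plan is to deduce convergence of the zeta functions from two inputs of different character: convergence of each individual eigenvalue, which is comparatively soft, and a heat-trace bound that is uniform in the mesh, which is the real content. Both are extracted from the approximation properties of the Whitney and de Rham maps. First I would record the two exact intertwining identities underlying the comparison: the composition $W\Phi_\A$ commutes with the differentials, $d\,(W\Phi_\A) = (W\Phi_\A)\, d_c$, which is the defining property of Whitney forms, and the de Rham map is an exact left inverse, $R\,(W\Phi_\A) = \textup{id}$ on $C^\bullet(K,F_\A)$, since $\int_\sigma W\tau^* = \delta_{\sigma\tau}$. Combined with Dodziuk's quantitative estimates that $W R \to \textup{id}$ in operator norm on $L^2\Omega^\bullet(M,E)$ at a rate controlled by the mesh, and that $R$ is asymptotically isometric for the Whitney inner product $\langle \cdot, \cdot \rangle^W_K$, these permit a two-sided comparison of the Rayleigh quotients of $\Delta^{K(\eta)}_\A$ and $\Delta_{\textup{abs}}$. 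The min-max principle then gives, for each fixed index $j$, the convergence $\lambda_j(\Delta^{K(\eta)}_\A) \to \lambda_j(\Delta_{\textup{abs}})$ as $\eta \to 0$.

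The delicate step is to promote this pointwise spectral convergence to convergence of the full zeta functions, since the latter sums over all eigenvalues while the number of combinatorial eigenvalues diverges as the mesh shrinks. The natural route is the Mellin representation
$$
\zeta(s, \Delta) = \frac{1}{\Gamma(s)} \int_0^\infty t^{s-1}\bigl(\textup{Tr}(e^{-t\Delta}) - \dim \ker \Delta\bigr)\, dt.
$$
For each fixed $t>0$ the eigenvalue convergence yields $\textup{Tr}(e^{-t\Delta^{K(\eta)}_\A}) \to \textup{Tr}(e^{-t\Delta_{\textup{abs}}})$, and I would pass to the limit under the integral for $\textup{Re}(s) > \dim M/2$ by dominated convergence. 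The large-time part of the integral is controlled by the spectral gap above the kernel, uniformly in $\eta$ by the eigenvalue comparison, while the dangerous small-time part requires an integrable dominating function. This is supplied by a heat-trace bound of the form
$$
\textup{Tr}\,\bigl(e^{-t\Delta^{K(\eta)}_\A}\bigr) \le C\, t^{-\dim M/2}, \quad 0 < t \le 1,
$$
with $C$ independent of the mesh. Once this is in hand the convergence is uniform on compact subsets of the half-plane $\textup{Re}(s) > \dim M/2$, as claimed.

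The main obstacle is exactly the uniform small-time heat-trace bound. It cannot be read off from the finite-dimensionality of $\Delta^{K(\eta)}_\A$, whose heat trace at fixed $\eta$ merely tends to the (diverging) total number of simplices as $t \to 0$; what is needed is a discrete Sobolev or Nash-type inequality on the metrized complex $K(\eta)$ whose constant remains stable under Whitney's standard subdivisions. This is precisely where the regularity of the triangulation enters: standard subdivisions prevent the combinatorial geometry from degenerating, keeping the comparison between the Whitney inner product and the $L^2$-structure uniformly nondegenerate, and bounding the discrete Laplacian from below by the analytic one up to controlled error. Establishing this uniform estimate, rather than the spectral convergence itself, is the technical heart of the arguments of Dodziuk and M\"uller.
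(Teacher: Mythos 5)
First, a framing remark: the paper does not prove this statement at all --- it is imported verbatim from Dodziuk \cite{Dod:FDA} and M\"uller \cite{Mue:ATA} --- so your proposal has to be measured against their arguments. Your overall architecture (the intertwining identities $d\,W\Phi_\A=W\Phi_\A\,d_c$ and $R\,W\Phi_\A=\mathrm{id}$, min--max for convergence of individual eigenvalues, then a mesh-uniform majorant to upgrade pointwise spectral convergence to convergence of zeta functions on $\Re(s)>\dim M/2$) is the right one. But you route the final step through the Mellin transform and a uniform small-time heat-trace bound, declare that bound to be the ``technical heart,'' and leave it unproven, attributing it to a discrete Nash/Sobolev inequality whose constant must survive subdivision. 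That is the gap --- and it is a gap you could have closed with tools you had already assembled. Because the combinatorial inner product in this setting is \emph{defined} by $\langle\phi,\psi\rangle^W_K=\langle W\Phi_\A\phi,W\Phi_\A\psi\rangle_{L^2}$, the map $W\Phi_\A$ is an \emph{exact} isometry onto its image, not merely an asymptotic one; combined with the intertwining of differentials it preserves Rayleigh quotients exactly, so the combinatorial min--max runs over a subfamily of the subspaces competing in the analytic min--max (Whitney forms lie in $\dom(d_{\max})$, the form domain of $\Delta_{\textup{abs}}$), and one gets the clean one-sided bound $\lambda_j(\Delta^{K(\eta)}_\A)\ge\lambda_j(\Delta_{\textup{abs}})$ for every $j$ and every mesh. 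This single inequality is the entire uniform input: it gives $|\lambda_j(K(\eta))^{-s}|\le\lambda_j^{-\Re(s)}$, summable for $\Re(s)>\dim M/2$ by Weyl's law for the \emph{analytic} eigenvalues, so Dodziuk applies dominated convergence directly to the Dirichlet series $\sum_j\lambda_j(K(\eta))^{-s}$ --- no heat kernel, no discrete functional inequality. (It also yields your heat-trace bound, $\sum_j e^{-t\lambda_j(K(\eta))}\le\sum_j e^{-t\lambda_j}\le Ct^{-\dim M/2}$, should you insist on the Mellin route.)

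A second, smaller defect: your claim that for fixed $t>0$ the eigenvalue convergence alone yields $\textup{Tr}(e^{-t\Delta^{K(\eta)}_\A})\to\textup{Tr}(e^{-t\Delta_{\textup{abs}}})$ does not stand on its own, since the number of combinatorial eigenvalues diverges as $\eta\to 0$; one needs uniform control of the tails $\sum_{j>J}e^{-t\lambda_j(K(\eta))}$, which is again exactly the inequality $\lambda_j(K(\eta))\ge\lambda_j$. So the estimate you flagged as the main obstacle is simultaneously the fix for this step and an immediate consequence of the exact isometry of the Whitney product. The genuinely hard direction in Dodziuk--M\"uller is the opposite one: the upper bound $\lambda_j(K(\eta))\le\lambda_j+\varepsilon$ needed for pointwise eigenvalue convergence, which is where the quantitative $WR\to\mathrm{id}$ estimates under standard subdivisions actually get used.
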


Similar result holds for combinatorial zeta functions of $\Delta_{K,L}$
which converge to analytic zeta functions of the Laplacians with relative
boundary conditions.

Finally, note that defining the combinatorial Laplacian $\Delta_K^q$ from an inner product on $C^*(K,E^c)$
depends on the choice of centers in $K^\vee$. However, $\Delta_K^q$ on $C^q(K,E^c)$ is
equivalent to the corresponding Laplacian on $C^q(K,F_\A)$ under the isometric identification
$\Phi_\A$, in particular combinatorial Laplacians defined with respect to different choices
of centers, are unitarily or orthogonally equivalent and have the same spectral properties.

\section{Classical field theory for free scalar fields}\label{DN}

Here we will outline classical field theory for free scalar field on
a metrized simplicial complex with boundary. Having in mind corresponding
Gaussian quantum field theory we will call such classical field
theory Gaussian.

\subsection{Classical scalar free theory on Riemannian manifolds}

\subsubsection{The action and its minimizers.} Let $(M,g)$ be a Riemannian compact oriented manifold,
possibly with boundary and with a Hermitian vector bundle $(E,h)$ over $M$ with a Hermitian
connection $\A$ on it. For classical scalar field theory on $M$
with values in $E$, the space of fields is the space of sections of $E$. We assume fields are smooth sections of $E$.
The action functional is
\begin{equation}\label{sc-field-cont}
S_{M}(\phi) := \frac{1}{2}\langle d_\A   \phi, d_\A \phi \rangle_M +
\frac{m^2}{2}\langle \phi, \phi\rangle_M.
\end{equation}
The constant $m^2$ has the meaning of the square of the mass of a particle.

For the variation of the action we have
\begin{align*}
\delta S_{M}(\phi) = \langle(d_\A^* d_\A  + m^2) \phi, \delta \phi \rangle_M +
\langle \partial^\A_\nu \phi, \delta \phi\rangle_{\pa M}.
\end{align*}
Here, $\partial^\A_\nu = \iota^* (\iota_\nu d_\A \psi)$ is the covariant normal derivative,
with $\iota: \pa M \hookrightarrow M$ denoting the natural inclusion, and $\iota_\nu$ the contraction with the unit
normal vector field on $\pa M$.
The boundary term vanishes for Dirichlet boundary conditions, i.e. if we assume $\phi|_{\partial M} =\eta$.
Euler Lagrange equations for this action are
\begin{align}\label{EL}
(d_\A^* d_\A  + m^2) \phi = 0,
\end{align}
which for fixed boundary conditions $\phi|_{\partial M} =\eta$ admit
a unique solution given in terms of the corresponding Poisson map $\phi_c=P_M\eta$. The value of
$S_{M}$ at the critical point $\phi_\eta=P_M\eta$ is then
\begin{align}\label{critical-M}
S_{M}(\phi_\eta) = \frac{1}{2}\langle R^M \eta, \eta \rangle_{\partial M},
\end{align}
where $R^M = \partial^\A_\nu \circ P_M$ is the Dirichlet to Neumann operator for $\Delta_M+m^2$ given
in terms of the Poisson operator $P_M$.

The action \ref{sc-field-cont} is local, i.e. for any open submanifold $N\subset M$ it satisfies the following property:
\[
S_{M}(\phi)= S_{M\backslash N}(\phi|_{M\backslash N}) + S_N(\phi|_N).
\]
When $N$ is of codimension one, disjoint from $\partial M$, the second term on the right side is absent and the
critical value of the action has the following gluing property:
\[
S_M(\phi_{\eta'}) = \min_{\eta }(S_{cl(M\backslash N)}(\phi_{\eta, \eta, \eta'})),
\]
where $cl(M\backslash N)$ denotes the closure of $M\backslash N$ with boundary comprised
of three components $N\sqcup N \sqcup \partial M$ and $\eta'$ is the fixed boundary value of fields on $\partial M$.

\subsection{Classical scalar free theory on metrized simplicial complexes}

In this section and in the rest of the paper we will focus on the scalar Bose field
when fields are elements of $C^0(K;E)$. Theories with fields from $C^1(K;E)$
and from higher degree cochains usually involve gauge symmetry and we will not consider them here.
In this section we assume that Riemannian norms on simplicial complexes are local.

\subsubsection{The action}

Let $(K,|\cdot |)$ be a metrized simplicial complex of dimension $n$ with $(n-1)$-dimensional boundary subcomplex $L$.
Let $(E,h)$ a combinatorial Hermitian vector bundle over $K$, with a Hermitian connection $\A$.

Define the action\footnote{Note that though we used orientation on $K$ for definition of $d_\A$,
orientability is not required for the definition of $\langle d_\A   \phi, d_\A \phi \rangle_{K}$ and the action
makes sense on non-orientable complexes as well.} of the free Bose scalar field theory as the following function on $C^0(K,E)$ as
\begin{equation}\label{gauss-act}
\begin{split}
S_K(\phi) &:= \frac{1}{2}\langle d_\A   \phi, d_\A \phi \rangle_{K} +
\frac{m^2}{2}\langle \phi, \phi \rangle_{K}  \\
&= \frac{1}{2} \sum_{\tau, \tau' \in K} h(d_\A \phi(\tau), \A(\tau, \tau') d_\A \phi(\tau')) |(\tau, \tau')| \\
&+ \frac{m^2}{2} \sum_{\sigma, \sigma' \in K} h(\phi(\sigma), \A(\sigma, \sigma') m^2(\sigma')\phi(\sigma')) |(\sigma, \sigma')|.
\end{split}
\end{equation}
When $L\subset K$ is a boundary subcomplex, define  the boundary action as
\begin{equation}\label{gauss-L}
\begin{split}
S_L(\phi) &:= \frac{1}{2}\langle d_\A   \phi, d_\A \phi \rangle_{L} +
\frac{1}{2}\langle \phi, m^2 \phi \rangle_{L}  \\
&= \frac{1}{2} \sum_{\tau, \tau' \in L} h(d_\A \phi(\tau), \A(\tau, \tau') d_\A \phi(\tau')) |(\tau, \tau')| \\
&+ \frac{m^2}{2} \sum_{\sigma, \sigma' \in L} h(\phi(\sigma), \A(\sigma, \sigma')\phi(\sigma')) |(\sigma, \sigma')|,
\end{split}
\end{equation}
Here $\phi \in C^0(L,E)$ and we emphasize that $S_L$ is defined with respect to the extrinsic\footnote{An intrinsic metrization of the
boundary $L$ never appears in our discussion.} metrization of $L$,
obtained as the restriction of the Riemannian norm on $K$ to $L$. In other words, weights in $S_L$ are the same as in $S_K$. Geometrically, in case of the Whitney scalar
product, the action depends on the Riemannian norm in a star neighborhood of $L$. Define
\begin{equation}\label{S-K-L}
S_{K,L}(\phi) := S_K(\phi) - S_L(\phi|_L).
\end{equation}

The action $S_{K,L}(\phi)$ by definition comprises the self-interaction between interior chains in $K\backslash L$
as well as the interaction between interior and the boundary chains, however does not contain interactions between the boundary chains themselves, i.e. it does not have terms $h(\phi(\sigma), \A(\sigma,\tau)\phi(\tau))$ where
both $\sigma$ and $\tau$ are in $L$.

For local scalar products on cochains the action is also local in the following sense.
Assume that $K$ admits a subcomplex $L$ that consists of three connected components
$L= L_1 \sqcup L_2 \sqcup L_3$. Let $L_1$ and $L_2$ be isometrically identified via $f$.
This defines a new chain complex $K_f$ with a single boundary subcomplex $L_3$ and a metrization induced from $K$.
Let $\phi$ be a cochain on $K$, such that $\phi|_{L_1}=f^*\phi|_{L_2}$.
Then $\phi$ lifts to well-defined cochain $\phi_f$ on $K_f$ and the locality of the action is expressed as follows
\[
S_K(\phi) = S_{K_f}(\phi_f)+S_{L_2}(\phi|_{L_2}).
\]

\subsubsection{Minimizers of the Gaussian action with Dirichlet boundary conditions.}
Let us describe the minimizer of the Gaussian action $S_{K,L}$ over the space
of cochains $\phi \in C^0(K,E)$ with given Dirichlet boundary condition $\phi|_L =\eta \in C^0(L,E)$.
Because for Dirichlet boundary conditions the variation $\delta \phi$ vanishes at $L$,
the variations of $S_K, S_{K,L}$ are identical and
\begin{align}\label{var-s}
\delta S_{K,L}(\phi) = \delta S_K(\phi) = \langle (\Delta_K + m^2) \phi, \delta \phi \rangle_K =0.
\end{align}
Explicitly, we have
\begin{align*}
\delta S_K(\phi)
&= \sum_{\tau, \tau' \in K} h(d_\A \phi(\tau), \A(\tau, \tau') d_\A \delta \phi(\tau')) |(\tau, \tau')| \\
&+ m^2\sum_{\sigma, \sigma' \in K} h(\phi(\sigma), \A(\sigma, \sigma')\delta\phi(\sigma')) |(\sigma, \sigma')|.
\end{align*}
For each vertex $\sigma \in K_0$, consider sets
\begin{equation}
\begin{split}
&\mathscr{V}(\sigma) := \{(\tau,\tau',\sigma') \mid \sigma \in \pa \tau, |(\tau,\tau')|\neq 0, \sigma' \in \pa \tau'\}, \\
&\mathscr{U}(\sigma) := \{\sigma' \mid \exists (\tau,\tau'): (\tau,\tau',\sigma') \in \mathscr{V}(\sigma)\}.
\end{split}
\end{equation}
In case of a local
(Whitney) inner product on a simplicial complex , $\mathscr{U}(\sigma) = \overline{\textup{St}(\sigma)} \cap K_0$ consists of
all vertices in the closure of the \emph{open star} of $\sigma$. We refer to $\mathscr{U}(\sigma)$ as the local
neighborhood of $\sigma\in K_0$. Then
\begin{align*}
\delta S_K(\phi)
&= \sum_{\sigma\in K} \sum_{\mathscr{V}(\sigma)} (-1)^{(\tau,\sigma) + (\tau',\sigma')}
h(\phi(\sigma), \A(\sigma,\sigma')\delta\phi(\sigma')) |(\tau,\tau')| \\
&+ m^2\sum_{\sigma\in K} \sum_{\sigma'\in K} h(\phi(\sigma), \A(\sigma,\sigma')\delta \phi(\sigma')) |(\sigma,\sigma')|.
\end{align*}
Consequently, the Euler-Lagrange equations for \eqref{var-s} can be written as
\begin{align*}
& \sum_{\mathscr{V}(\sigma)} (-1)^{(\tau,\sigma) + (\tau',\sigma')}
h(\phi(\sigma'), \A(\sigma',\sigma)\phi(\sigma)) |(\tau,\tau')| \\
&+ m^2\sum_{\sigma'\in K} h(\phi(\sigma'), \A(\sigma',\sigma) \phi(\sigma')) |(\sigma,\sigma')| = 0.
\end{align*}
We have such equation for each $\sigma\in (K\backslash L)_0$. Each equation is a linear
equation involving  vertices in the local neighborhood $\mathscr{U}(\sigma)$.

Because of convexity\footnote{i.e. for all $0<\theta<1$ and
$\phi, \psi \in C^0(K,E)$ with $\phi|_L, \psi|_L=\eta$
$$S_{K, L}(\theta\phi+(1-\theta)\psi)\leq \theta S_{K, L}(\phi)+(1-\theta)S_{K, L}(\psi).$$
} of $S_{K,L}$ on fibers of $p$, the solution to this difference equation
with Dirichlet boundary conditions $\phi|_L=\eta$, exists and is unique.
We denote it by $\phi_\eta$. The solution $\phi_\eta$ is linear in $\eta$ and hence we can
define the discrete version of the Poisson kernel as
\[
\phi_\eta(\sigma) = \sum_{\sigma'\in L} P_{K,L}(\sigma, \sigma') \eta(\sigma'),
\]
or $\phi_\eta =P_{K,L}\eta$. The value of $S_{K,L}(\phi_\eta)$ at the critical
point $\phi_\eta$ is quadratic in $\eta$
and we can write
\begin{align}\label{DN-def}
S_{K,L}(\phi_\eta) = \frac{1}{2} \langle \eta, R^K_L \eta \rangle_L,
\end{align}
where $R^K_L$ is the discrete version of the Dirichlet-to-Neumann operator.

\begin{remark} For a diagonal inner product, the Poisson map and hence also the Dirichlet-to-Neumann operator
are explicit (see \cite[Theorem 2.1]{ClaMaz:CSA}). In this case
the boundary value problem
\begin{equation}
\label{BVP}
\begin{split}
(\Delta_K + m^2) \phi &= 0, \textup{on vertices of} \ K\backslash L, \\
\phi (\sigma) &=\eta (\sigma), \textup{for all} \ \sigma \in L.
\end{split}
\end{equation}
has a unique solution $\phi=P_K \eta \in C^0(K, E)$, and assuming
the scalar product on $C^\bullet (K,E)$ is diagonal,
the Poisson operator $P_K$ can be described explicitly by
\begin{equation}
 \begin{split}
 P (P_K \eta)  = - \left(P \Delta_K P + m^2\right)^{-1} P \Delta_K \eta, \quad
(\textup{Id}-P) R_K = \eta.
 \end{split}
\end{equation}
Here $P$ is the natural projection $P: C^0(K, E) \to C^0(K, L; E) \subset C^0(K, E)$
acting trivially on fibres of $E$.
\end{remark}

Values of the action functional on critical points have the following
gluing property in case of local scalar products.
Assume as before that $K$ admits a subcomplex $L$ that consists of three connected components
$L= L_1 \sqcup L_2 \sqcup L_3$. Let $L_1$ and $L_2$ be isometrically identified via $f$.
This defines a new chain complex $K_f$ with a single boundary subcomplex $L_3$ and a metrization induced from $K$. Then
\begin{equation}\label{gl-cr-act}
S_{K_f, L_3}(\phi_{\eta_3}) = \min_{\eta_{12}}(S_{K, L}(\phi_{\eta_{12}, \eta_{12}, \eta_3})+S_{L}(\eta_{12})).
\end{equation}

\section{Discrete Quantum Gaussian field theory}\label{QFT}

\subsection{The partition function.}

\subsubsection{Category of cobordisms}
Recall that  a quantum field theory on $n$-dimensional Riemannian manifolds can be regarded
as functor from the category of  Riemannian $n$-dimensional cobordisms to the category of vector spaces.
An object in the category of Riemannian $n$-dimensional cobordisms is a smooth, oriented, compact $(n-1)$ dimensional
Riemannian manifold $N$ with an $n$-dimensional smooth, oriented Riemannian collar
$\mathscr{U}(N)\cong (-\epsilon, \epsilon)\times N$ equipped with a product Riemannian metric.

A morphism between two objects $N_1$ and $N_2$ is a smooth, oriented, compact $n$-dimensional
manifold $M$ such that $\pa M= \overline{N_1}\sqcup N_2$ with collars at each connected component of the boundary.
The composition of morphism is gluing such that collars on the common boundary agree
with both morphisms.

The combinatorial analog of this category is the category of metrized\footnote{Recall that a metrized simplicial
complex is a simplicial complex $L$ with scalar product on the corresponding cochain complex $C^\bullet(L)$
(e.g. over $\R$). An isometry of simplicial complexes is an isomorphism of simplicial complexes which yields an
isometry between the corresponding scalar product cochain spaces.} local $n$-dimensional simplicial complexes.
An object in this category is an $(n-1)$-dimensional simplicial complex $L$ with a metrized $n$-dimensional collar complex $\mathscr{U}(L)$ which is
homotopy equivalent to $L$. The metrization on $L$ is required to arise from the metric structures on $\mathscr{U}(L)$ by restriction.

For local metric structures, $\mathscr{U}(L)$
is the star neighborhood of $L$. For example, this is the case for Whitney scalar product.

\subsubsection{Framework of QFT}\label{framework} The framework of an $n$-dimensional local quantum field theory,
cf. \cite{At} and \cite{Se}, applied to metrized simplicial complexes consists of the following two assignments:

\begin{itemize}
\item To each $(n-1)$-dimensional metrized oriented simplicial complex $L$ with an $n$-dimensional collar $\mathscr{U}(L)$ we assign a
$\C$-vector space $H(L)$ with  a non-degenerate linear pairing
\[
( \ , \, )_{L}: H(L')\otimes H(L)\to \CC,
\]
where the simplicial complexes $L'$ and $L$ differ only by orientation.

\item An orientation reversing automorphism $\sigma_L$ of $L$ lifts to
a an isomorphism of vector spaces $\widehat{\sigma}_L: H(L)\to H(L')$, inducing the structure
of a Hilbert space on $H(L)$\footnote{In case of non-orientable simplicial complexes, we only assign a Hilbert space structure on $H(L)$.}. An orientation preserving isometry
$f: L_1\to L_2$ of metrized simplicial complexes lifts to an
isometry $\widehat{f}: H(L_1)\to H(L_2)$.

\item The collar $\mathscr{U}(L)$ is separated by $L$ into subcomplexes $L_{\pm}$, such that
$\mathscr{U}(L)=L^+\cup_L L^-$. We will call $L^+$ and $L^-$ right and left neighborhoods of $L$
respectively. To each metrized $n$-dimensional simplicial complex $K$ such that $L^+\subset K$ is
a metrized simplicial subcomplex we assign the vector
\[
Z_{K,L}\in H(L).
\]
\end{itemize}

These data should satisfy certain axioms. One of the most important axioms
is the locality property of the
partition function, also known as the gluing axiom. Assume
that the boundary of a simplicial complex $K$ has three connected components $L_1$, $L_2$ and $L_3$.
Assume that the corresponding right neighborhoods $L^+_1, L^+_2, L^+_3$ are disjoint.
We write $L:= L_1 \sqcup L_2 \sqcup L_3$. Assume that metrized simplicial complexes
$L_1$ and $L_2$ together with their collars are isometric via an
isometry $f$ of simplicial complexes. Let $K_f$ be the result of the gluing the boundary component $L_2$ to
$L_1$ via the orientation reversing isometry $\sigma_{L_2} \circ f$, and that the scalar product
on $K$ is the pullback of the scalar product on $K_f$.

The partition function $Z_{K,L}$ is a vector in $H(L_1)\otimes H(L_2)\otimes H(L_3)$.
Then we require the following
\begin{equation}\label{glu}
(( \ , \, )_{L_2}\otimes \textup{id})
(\widehat{\sigma_{L_2} \circ f}\otimes \textup{id} \otimes \textup{id})
(Z_{K,L} )=Z_{K_f, L_3}\in H(L_3).
\end{equation}

\subsubsection{Gaussian QFT} Now let us construct a Gaussian
quantum field theory which satisfies all
these properties.

\begin{itemize}

\item To an  $(n-1)$ dimensional metrized simplicial complex $L$ with a local scalar product\footnote{in the
sense of \S \ref{locality}.} and an $n$-dimensional collar $\mathscr{U}(L)$ and a Hermitian vector bundle $(E,h)$
we assign the Hilbert space
\[
H(L)=L^2(C^0(L;E)),
\]
with the scalar product
\[
( f, g )_{H(L)} :=\int_{C^0(L;E)}  \overline{f(\eta)}g(\eta)\, e^{-S_L(\eta)} d\eta.
\]
Here $S_L$ is the classical Gaussian action defined as in \eqref{gauss-L}
defined with respect to the scalar product on cochains given by restriction of the metric structure on the collar $\mathscr{U}(L)$. We integrate with respect to the Euclidean measure $d\eta$ on $C^0(L;E)$. Note that $S_L(\eta)$ is positive definite.

\item If we assume that $K$ is oriented, then the pairing $( \ , \, )_{L}: H(L')\otimes H(L)\to \CC$
is given by the composition of the natural $L^2$-scalar product on $H(L)$
and the lift $\widehat{\sigma}_L$ of the orientation reversing automorphism
$\sigma_L: L\to L'$. Gaussian action is invariant under change of orientation, i.e.
$S_{\sigma_L(L)}=S_L$ and the \emph{Hermitian axiom} of Atiyah-Segal is trivially satisfied.
Moreover, the pullback $\sigma_L^*$ to cochains is trivial in degree zero, so that
$H(L)=H(L')$ and the pairing $( \ , \, )_{L}$ coincides with the scalar product on $H(L)$.

\item To the metrized simplicial complex $K$ with boundary $L\subset K$, equipped with its right neighborhood
$\mathscr{U}(L)^+\subset K$ fitting $K$ and with the weights on $\mathscr{U}(L)^+$ given by weights on $K$,
we will assign the \emph{partition} function
$Z_{K, L} \in H(L)$ as follows. Consider the natural projection $p: C^0(K;E)\to C^0(L;E)$
which is the restriction to the boundary. For each $\eta \in C^0(L;E)$ we set
(recall \eqref{S-K-L})
\begin{equation}\label{pf}
Z_{K, L}(\eta) := \int_{p^{-1}(\{\eta\})} \exp \left( - S_{K,L}(\phi) \right) \, d\phi,
\end{equation}
\end{itemize}
Here the integration measure is the Euclidean measure on the vector space corresponding to the affine space $p^{-1}(\{\eta\})\subset C^0(K;E)$.
The integral is convergent, so partition function $Z_{K,L}(\eta)$ is defined because the classical action $S_K$ is
positive and strictly convex on each fiber $p^{-1}(\eta)$. We have $Z_{K,L}\in H(L)$ with
$\|Z_{K,L}\|^2_{H(L)} = Z_{\widetilde{K}, \varnothing}\in \C$\footnote{We identify $H(\varnothing)\cong \C$.}, where
$\widetilde{K}:=K\cup_L \sigma_L^{-1}(K)$ is the closed double of $K$. \medskip

The gluing property is an exercise on Fubini's theorem.

\begin{theorem}\label{gluing-theorem}
The partition function $Z_{K,L}$ satisfies the gluing axiom.
\end{theorem}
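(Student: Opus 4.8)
The plan is to unwind both sides of the gluing axiom \eqref{glu} into explicit Gaussian integrals and then recognize the identity as a single application of Fubini's theorem. First I would interpret the tensor product $H(L_1)\otimes H(L_2)\otimes H(L_3)$ as the space of functions on $C^0(L_1;E)\times C^0(L_2;E)\times C^0(L_3;E)$, so that $Z_{K,L}$ is literally the function $(\eta_1,\eta_2,\eta_3)\mapsto Z_{K,L}(\eta_1,\eta_2,\eta_3)$. Since $\sigma_{L_2}^*$ is trivial in degree zero and $f\colon L_1\to L_2$ is a simplicial isometry, the lift $\widehat{\sigma_{L_2}\circ f}$ acts on the first factor by precomposition with $f^*$, and the pairing $(\,\cdot\,,\cdot\,)_{L_2}$ contracts the first two slots by integration against $e^{-S_{L_2}}$. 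Because $Z_{K,L}$ is real-valued, being an integral of a positive exponential, the complex conjugate in the pairing is inert. Carrying out these identifications reduces the left-hand side of \eqref{glu}, as a function of $\eta_3$, to
\[
\int_{C^0(L_2;E)} Z_{K,L}(f^*\xi,\,\xi,\,\eta_3)\, e^{-S_{L_2}(\xi)}\, d\xi.
\]

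Next I would substitute the definition \eqref{pf} of $Z_{K,L}$, turning this into an iterated integral: an outer integral over $\xi\in C^0(L_2;E)$ and an inner integral over the fiber $p^{-1}(f^*\xi,\xi,\eta_3)$, i.e. over cochains $\phi$ on $K$ with $\phi|_{L_1}=f^*\xi$, $\phi|_{L_2}=\xi$ and $\phi|_{L_3}=\eta_3$. The decisive observation is that the two boundary constraints are equivalent to $\phi|_{L_1}=f^*(\phi|_{L_2})$, which is exactly the condition under which $\phi$ descends to a well-defined cochain $\phi_f$ on $K_f$. Thus the pair $(\xi,\phi)$ ranging over the domain of the iterated integral is in bijection with cochains $\phi_f$ on $K_f$ subject to $\phi_f|_{L_3}=\eta_3$; a dimension count, using that the isometry $f$ identifies $L_1$ with $L_2$, confirms that this bijection is onto the fiber $p_f^{-1}(\eta_3)\subset C^0(K_f;E)$.

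The heart of the computation is then the locality of the action. Using $S_K(\phi)=S_{K_f}(\phi_f)+S_{L_2}(\phi|_{L_2})$ together with $S_{K,L}=S_K-S_{L_1}-S_{L_2}-S_{L_3}$, where the decomposition $S_L=S_{L_1}+S_{L_2}+S_{L_3}$ follows from locality and the disjointness of the three components, and the isometry identity $S_{L_1}(f^*\xi)=S_{L_2}(\xi)$, I would check that the full exponent collapses:
\[
S_{K,L}(\phi)+S_{L_2}(\xi)=S_{K_f}(\phi_f)-S_{L_3}(\eta_3)=S_{K_f,L_3}(\phi_f).
\]
Here the $S_{L_2}(\phi|_{L_2})$ generated by cutting $K$ open cancels the $S_{L_2}$ term subtracted in the definition of $S_{K,L}$, while the weight $S_{L_2}(\xi)$ contributed by the boundary pairing cancels the remaining term $S_{L_1}(f^*\xi)=S_{L_2}(\xi)$. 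With the integrand now equal to $e^{-S_{K_f,L_3}(\phi_f)}$, an application of Fubini's theorem collapses the iterated integral over $(\xi,\phi)$ into the single integral $\int_{p_f^{-1}(\eta_3)}e^{-S_{K_f,L_3}(\phi_f)}\,d\phi_f=Z_{K_f,L_3}(\eta_3)$, which is the right-hand side of \eqref{glu}.

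The main obstacle I anticipate is not the algebra of the action but the careful matching of the integration measures. One must verify that the product of the Euclidean measure $d\xi$ on $C^0(L_2;E)$ with the Euclidean measure on the interior fiber coincides with the Euclidean fiber measure on $C^0(K_f;E)$; this is precisely where the isometry property of $f$, ensuring that the change of variables has unit Jacobian, and the correct bookkeeping of which vertices are glued become essential. Once the measure identification is established, the remainder is exactly the exercise on Fubini's theorem promised before the statement.
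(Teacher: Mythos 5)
Your proposal is correct and follows essentially the same route as the paper: unwind the pairing into the integral $\int Z_{K,L}(f^*\eta_2,\eta_2,\eta_3)\,e^{-S_{L_2}(\eta_2)}\,d\eta_2$, identify the combined integration domain with the fiber over $\eta_3$ in $C^0(K_f;E)$, use the locality identity $S_K(\phi)=S_{K_f}(\phi_f)+S_{L_2}(\phi|_{L_2})$, and conclude by Fubini. Your explicit accounting of the cancellation $S_{K,L}(\phi)+S_{L_2}(\xi)=S_{K_f,L_3}(\phi_f)$ (via $S_L=S_{L_1}+S_{L_2}+S_{L_3}$ and $S_{L_1}(f^*\xi)=S_{L_2}(\xi)$) and of the measure matching merely spells out what the paper leaves implicit.
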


\begin{proof}
Recall that in \eqref{glu} $K$ is a simplicial complex with the boundary $L=L_1\sqcup L_2\sqcup L_3$
and $L_1$ and $L'_2$ are isometric
via orientation preserving isometry $f_\sigma = \sigma_{L_2} \circ f: L_1\to L'_2$ of simplicial complexes.  Let us compute the left side of
the identity \eqref{glu}.  The pullback of $f_\sigma$ to cochains defines
$f_\sigma^*: C^0(L'_2;E) \to C^0(L_1;E)$. The mapping $\widehat{f_\sigma}: H(L_1)\to H(L_2')$ is the pull-back of $f^*_\sigma$.
For $(\eta'_2,\eta_2,\eta_3) \in C^0(L'_2;E) \otimes C^0(L_2;E) \otimes C^0(L_3;E)$ we have
\[
((\widehat{f_\sigma}\otimes \textup{id} \otimes \textup{id})
Z_{K, L})(\eta'_2,\eta_2,\eta_3)= Z_{K, L}(f_\sigma^*\eta'_2,\eta_2,\eta_3).
\]
Let $K_f$ be the simplicial complex obtained from $K$ by gluing the boundary component $L_2$ to $L_1$ via
the orientation reversing isomorphism $f_\sigma$. The scalar product on $K_f$ is the
pullback of the scalar product on $K$. Then any $\phi\in  C^0(K_f;E)$ with
$\phi|_{L_2}=\eta_2$ and $\phi|_{L_3}=\eta_3$ can be regarded as $\phi\in C^0(K;E)$ with
$\phi|_L = (f^* \eta_2,\eta_2,\eta_3)$. Moreover, by construction
\[
S_{K}(\phi) = S_{K_f}(\phi) + S_{L_2}(\eta_2).
\]
Consequently we obtain
\begin{equation}\label{ZK}
\begin{split}
&(( \ , \, )_{L_2}\otimes \textup{id})
(\widehat{f_\sigma}\otimes \textup{id} \otimes \textup{id})
(Z_{K,L} ) (\eta_3) \\ &= \int_{C^0(L_2;E)}  Z_{K, L}(f^*
\eta_2,\eta_2,\eta_3) \, e^{-S_{L_2}(\eta_2)} d\eta_2
=Z_{K_f, L_3}(\eta_3).
\end{split}
\end{equation}
\end{proof}

Write $\langle \cdot, \cdot \rangle_0$
for the Euclidean inner product on cochains (in the simplex basis). Then there exists an endomorphism $Q_K$ on $C^0(K,E)$, Hermitian with respect to the Euclidean inner product, such that
$\langle \phi, \psi \rangle_K = \langle Q_K \phi, \psi \rangle_0$ for any $\phi, \psi \in C^0(K,E)$.
We write $\Delta^{loc}_{K} = Q_{K\backslash L} \circ \Delta_{K}$\footnote{The operator $\Delta^{loc}_K$
is local in a sense that it acts between nearest neighboring vertices.}, where $\Delta_{K}$ is the combinatorial
Laplacian on $K$ with Dirichlet boundary conditions at the boundary subcomplex $L$. 
The Gaussian integral (\ref{pf}) is then easy to compute.
\[
Z_{K,L}(\eta)=(2\pi)^{\frac{|K\backslash L|}{2}}\frac{e^{-S_{K,L}(\phi_\eta)}}
{\sqrt{\det(\Delta^{loc}_{K}+m^2Q_{K\backslash L})}}.
\]
If we substitute this formula into the gluing identity \eqref{ZK} for the partition function
we arrive at the following relation
\begin{align*}
Z_{K_f, L_3}(\eta_3) 
&= (2\pi)^{\frac{|K_f\backslash L_3|}{2}} \frac{e^{-S_{K_f,L_3}(\phi_{\eta_3})}}
{\sqrt{\det(\Delta^{loc}_{K_f}+m^2Q_{K_f\backslash L_3})}}
\\ &= (2\pi)^{\frac{|K\backslash L|}{2}}\int\limits_{C^0(L_2;E)}  \frac{e^{-S_{K,L}
(\phi_{f^*\eta_2,\eta_2,\eta_3})-S_{L_2}(\eta_2)}}
{\sqrt{\det(\Delta^{loc}_{K}+m^2Q_{K\backslash L})}} d\eta_2
\end{align*}

From there we get the gluing identity for critical values of the classical action (\ref{gl-cr-act}) in the exponent
and, in view of \eqref{DN-def} the following identity of determinants in the pre-exponent:
\begin{align}\label{gluing-loc}
\frac{\det(\Delta^{loc}_{K_f}+m^2Q_{K_f \backslash L_3})}{\det(\Delta^{loc}_{K}+m^2Q_{K\backslash L})}=\det \mathscr{R}^{loc}_c(K_f,L_2),
\end{align}
where $\mathscr{R}^{loc}_c(K_f,L_2)= Q_{L_2} \circ \mathscr{R}_c(K_f,L_2)$ and $\mathscr{R}_c(K_f,L_2)$ is defined by
\begin{align}\label{DN-gluing}
\langle \mathscr{R}_c(K_f,L_2) \eta_2, \eta_2 \rangle_{L_2} := \langle R^K_L (\eta_2, \eta_2, 0), (\eta_2, \eta_2, 0)\rangle_{L_1\sqcup L_2} + \langle \Delta_{L_2} \eta_2, \eta_2 \rangle_{L_2}.
\end{align}
This proves Theorem \ref{main1}. 

It remains to identify a gluing relation for the endomorphism $Q_K$, associated to the quadratic form
of the scalar product on $C^0(K,E)$. With respect to the direct sum decomposition $C^0(K_f \backslash L_3,E) =
C^0(K\backslash L, E) \oplus C^0(L_2,E)$ we may write in the basis defined by the duals of the vertex elements
\begin{equation}\label{block}
Q_{K_f\backslash L_3} = \left(
\begin{array}{cc}
Q_{K\backslash L} & A \\
A^t & Q_{L_2}
\end{array} \right),
\end{equation}
where $A:C^0(L_2,E) \to C^0(K\backslash L,E)$ describes the interaction in the inner product of $K$
between vertices at the boundary subcomplex $L_2$ with the interior vertices. From the block representation \eqref{block}
and from \eqref{gluing-loc} we find
\begin{equation}\label{gluing-loc-formula}
\begin{split}
\frac{\det(\Delta_{K_f}+m^2)}{\det(\Delta_{K}+m^2)} &=
\frac{\det Q_{K \backslash L}\det Q_{L_2}}{\det Q_{K_f \backslash L_3}}\det \mathscr{R}_c(K_f,L_2) \\
&=\frac{\det Q_{L_2}}{\det (Q_{L_2} - A^t \circ Q_{K \backslash L}^{-1} \circ A)}\det \mathscr{R}_c(K_f,L_2).
\end{split}
\end{equation}
\section{From identities for discrete Laplacians to BFK-identities}\label{approximation}

This section is devoted to a proof of Theorem \ref{main2}.
Here we basically apply the techniques
by Dodziuk \cite{Dod:FDA} and M\"uller \cite{Mue:ATA}.

We recall the setup and notation laid out in \S \ref{relating-gluing}.
Let $(M,g)$ be a compact Riemannian manifold
with boundary $\partial M$ that consists of two disjoint boundary components
$N_1,N_2$.\footnote{By locality of the
argument, we may assume without loss of generality that $N_3, L_3 = \varnothing$.
} Consider a flat Hermitian vector bundle $(E,h, \A)$. Flatness implies product structure over a collar neighborhood of $\partial M$. We denote by $\Delta_{M}$ the Laplace Beltrami operator acting of functions on $M$ with values in $E$ and with Dirichlet boundary conditions at the boundary.

Assume that $g$ is product near $N_1$ and $N_2$ and define the closed double $\widetilde{M}$ by gluing a second copy of $M$ along the boundary. Let $N_1$ and $N_2$
be identified by an isometry $f$, and denote by $M_f$ the closed Riemannian manifold obtained
from $M$ by gluing $N_1$ onto $N_2$. The flat Hermitian vector bundle $(E,h, \A)$
induces smooth flat vector bundles over $M_f$ and $\widetilde{M}$.
We write $\Delta_{\widetilde{M}}$ and $\Delta_{M_f}$ for the twisted Laplacians on $\widetilde{M}$ and $M_f$, respectively, with Dirichlet boundary conditions at the respective boundaries.

Consider a simplicial complex $K$ which triangulates $M$ with
subcomplexes $L_1, L_2$ triangulating $N_1, N_2$, respectively.
Its double $\widetilde{K}$ along $L_1 \sqcup L_2$ is a simplicial decomposition of $\widetilde{M}$.
The simplicial complex $K_f$, obtained by gluing $K$ along the two identified
boundary components $L_1$ and $L_2$, decomposes $M_f$.

The pullbacks of the combinatorial analog of $E$ over $K$ define combinatorial vector bundles
over $K_f$ and $\widetilde{K}$. The combinatorial Riemannian structure on $K$ defines natural
combinatorial Riemannian structures on $K_f$ and $\widetilde{K}$. The metric structure on $M$
and the Whitney map define a combinatorial Riemannian structure on $K$ together with
the inner product on corresponding cochains with
values in combinatorial vector bundle.
Denote by $\Delta_{K_f}$ and $\Delta_{\widetilde{K}}$ the combinatorial
Laplace operators on cochains of degree zero, defined with respect to
this inner product, with Dirichlet boundary conditions at the
boundary.

Consider any covering $\{U_\A\}_{\A\in A}$ of $M_f$ by open subsets, such that the closure $\overline{U}_\A$
of each is a submanifold of $M_f$ with smooth boundary $\partial \overline{U}_\A$. Choose a subordinate partition
of unity $\{\phi_\A\}$. Let $\{\psi_\A\}$ be a family of functions $\psi_\A \in C^\infty_0(U_\A)$
with compact support in $U_\A$ such that $\psi_\A \restriction \textup{supp}\phi_\A \equiv 1$.

Let $\Delta_\A$ denote closure of the Hodge Laplace operator on $C^\infty(U_\A, E)$
with absolute boundary conditions\footnote{The arguments of this section hold similarly for Hodge Laplacians on differential forms.}.
Denote by $H_\A: L^2(U_\A, E) \to \ker \Delta_\A$
the corresponding harmonic projection and set $D_\A:= \Delta_\A + H_\A$. Then
\cite[Definition 8.10]{Mue:ATA} defines a pseudo-differential operator
\begin{align}\label{E}
E_{M_f}(s) := \sum_{\A\in A} \phi_\A D_\A^{-s} \psi_\A,
\end{align}
such that the difference between $(\Delta_{M_f} + H)^{-s}$ and $E_{M_f}(s)$ is smoothing
for every $s\in \C$, cf. \cite[Theorem 8.11]{Mue:ATA}.
Here, obviously $\Delta_{M_f}$ is the Hodge Laplacian on $\Omega^0(M_f,E)$
and $H$ the corresponding harmonic projection.

Given a smooth triangulation $K_f$ of $M_f$, we may choose an \emph{admissible} covering $\{U_\A\}_{\A\in A}$ of $M_f$,
such that $K_f$ induces a smooth triangulation $K_\A$ of each submanifold $\overline{U}_\A \subset M$
and its boundary $\partial \overline{U}_\A$. Let $\Delta^c_\A$ be the discrete Laplacian on
$C^0(K_\A, E)$, $H^c_\A$ the corresponding harmonic projection. Write $D^c_\A:= \Delta^c_\A + H^c_\A$.
We denote by $\Delta_{K_f}$ the discrete Laplacian on $C^0(K_f,E)$.
Recall the de Rham map from \eqref{de-Rham}. Then \cite[Definition 8.12]{Mue:ATA} defines
a combinatorial parametrix
\begin{align}
E^c_{K_f}(s) := \sum_{\A\in A} \phi_\A W (D^c_\A)^{-s} A \psi_\A.
\end{align}
Its trace is defined as follows. Let $\{a_l\}_{l=1}^N$ be an orthonormal basis
of $C^0(K_f, E)$ with the scalar product induced by the Whitney map $W$
respective to the given triangulation $K_f$. Then
\begin{align}
\textup{Tr}E^c_{K_f}(s) := \sum_{l=1}^N\langle E^c_{K_f}(s) W a_l, W a_l \rangle_{L^2(M_f, E)}.
\end{align}
The crucial property of the presented construction is the following
\begin{thm}\label{conv}\cite[Theorem 8.43, 8.44]{Mue:ATA}
The family $\zeta(s, \Delta_{K_f}) - \textup{Tr}E^c_{K_f}(s)$ is holomorphic in $s\in \C$
and locally uniformly (in $K_f$) bounded. Moreover, as the mesh $\delta>0$ of the
triangulation $K_f$ goes to zero under standard subdivisions, $\zeta(s, \Delta_{K_f}) - \textup{Tr}E^c_{K_f}(s)$
converge uniformly on every compact subset of $\C$ to the holomorphic
function $\zeta(s, \Delta_{M_f}) - \textup{Tr}E_{M_f}(s)$.
\end{thm}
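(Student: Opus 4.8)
The plan is to follow the localization strategy of Müller \cite{Mue:ATA}, reducing the global statement to uniform estimates on the pieces $\overline{U}_\A$ of the admissible covering. First I would record the holomorphy, which in the discrete setting is essentially formal. Since $\Delta_{K_f}$ acts on the finite-dimensional space $C^0(K_f,E)$, the function $\zeta(s,\Delta_{K_f})$ is a finite sum of terms $\lambda_j^{-s}$ over the nonzero eigenvalues and is therefore entire; likewise each complex power $(D^c_\A)^{-s}$ is built from a finite-dimensional positive operator, so $\textup{Tr}E^c_{K_f}(s)$ is entire, and hence so is the difference $\zeta(s,\Delta_{K_f}) - \textup{Tr}E^c_{K_f}(s)$ on all of $\C$. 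The purpose of subtracting the parametrix is not holomorphy but uniform control in the mesh: $\zeta(s,\Delta_{K_f})$ alone cannot converge to $\zeta(s,\Delta_{M_f})$ for $\textup{Re}(s)\le \dim M/2$, since the analytic zeta function has poles there while the combinatorial one is entire. These poles are carried by $\textup{Tr}E_{M_f}(s)$, which by the smoothing property \cite[Theorem 8.11]{Mue:ATA} differs from $(\Delta_{M_f}+H)^{-s}$ by a smoothing family, so that $\zeta(s,\Delta_{M_f}) - \textup{Tr}E_{M_f}(s)$ is holomorphic on $\C$; the combinatorial parametrix $E^c_{K_f}(s)$ is designed to absorb the same poles uniformly in $\delta$.

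For the uniform boundedness and the convergence I would represent the complex powers as contour integrals
\[
(D^c_\A)^{-s} = \frac{1}{2\pi i}\oint_\Gamma \lambda^{-s}\, (D^c_\A - \lambda)^{-1}\, d\lambda,
\]
and likewise for $D_\A^{-s}$, with $\Gamma$ a fixed contour encircling the relevant spectrum and avoiding $0$. This is legitimate because the nonzero spectra of $\Delta^c_\A$ are bounded away from $0$ uniformly in $\delta$, by the spectral convergence result recalled above (\cite[Th. 5.30]{Dod:FDA}, \cite[Th. 4.6]{Mue:ATA}) applied to each piece with absolute boundary conditions. On a compact set of $s$-values the factor $\lambda^{-s}$ is uniformly bounded along $\Gamma$, so it suffices to control the resolvents. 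Inserting the partition of unity $\{\phi_\A\}$ and the cut-offs $\psi_\A\equiv 1$ on $\textup{supp}\,\phi_\A$, the difference $\zeta(s,\Delta_{K_f}) - \textup{Tr}E^c_{K_f}(s)$ is recast as a finite sum of localized traces comparing the global combinatorial power on $C^0(K_f,E)$ with the patched local powers $W(D^c_\A)^{-s}A$; each such local difference is of smoothing type with an operator bound uniform in $\delta$.

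The convergence then follows by comparing the combinatorial contour integrals with their analytic counterparts. The de Rham map $R$ and the Whitney map $W$ intertwine the two pictures with $WR\to \textup{id}$ under standard subdivisions \cite[Theorem 3.7]{Dod:FDA}, and the local combinatorial resolvents $(D^c_\A - \lambda)^{-1}$ converge, after conjugation by $W$, to the analytic resolvents $(D_\A - \lambda)^{-1}$ uniformly along $\Gamma$. Integrating over the fixed contour and summing over the covering yields convergence of $\textup{Tr}E^c_{K_f}(s)$ to $\textup{Tr}E_{M_f}(s)$, while convergence of $\zeta(s,\Delta_{K_f})$ to $\zeta(s,\Delta_{M_f})$ on the half-plane $\textup{Re}(s)>\dim M/2$ follows from the same spectral input. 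Together with the uniform boundedness from the previous step, these combine into locally uniform convergence of the holomorphic difference on every compact subset of $\C$ by an application of Vitali's theorem.

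The main obstacle, exactly as in Müller's original argument, is the uniform-in-mesh resolvent estimate: one must show that the localized traces are bounded on compact $s$-sets by constants independent of $\delta$. This requires quantitative control of the off-diagonal decay of the combinatorial resolvents (equivalently, the discrete heat kernels) that remains stable as the mesh shrinks, so that the smoothing property of the parametrix difference survives the discrete approximation. Everything else is a matter of assembling the local estimates through the partition of unity and passing to the limit along the fixed contour $\Gamma$.
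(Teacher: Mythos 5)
You should first be aware that the paper offers no proof of this statement at all: Theorem \ref{conv} is quoted verbatim from M\"uller \cite[Theorems 8.43, 8.44]{Mue:ATA} and used as a black box, so there is no in-paper argument to compare yours against. Your outline is a reasonable reconstruction of M\"uller's localization strategy, and you correctly identify that holomorphy of the combinatorial difference is trivial (everything is finite-dimensional) and that the actual content is uniform control in the mesh $\delta$, with the parametrix subtracted precisely to cancel the poles of $\zeta(s,\Delta_{M_f})$ in $\textup{Re}(s)\le \dim M/2$.

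That said, as a proof your proposal has two genuine gaps. First, the contour representation $(D^c_\A)^{-s}=\frac{1}{2\pi i}\oint_\Gamma \lambda^{-s}(D^c_\A-\lambda)^{-1}\,d\lambda$ with a \emph{fixed} contour $\Gamma$ cannot work uniformly in $\delta$: the top of the spectrum of $\Delta^c_\A$ grows like $\delta^{-2}$ under subdivision, so either $\Gamma$ must grow with $\delta$ (destroying the claimed uniform bound on $\lambda^{-s}$ over compact $s$-sets, especially for $\textup{Re}(s)<0$) or it fails to enclose the spectrum. M\"uller's actual argument avoids this by working with the heat semigroup representation $\frac{1}{\Gamma(s)}\int_0^\infty t^{s-1}\textup{Tr}(\cdots)\,dt$, splitting off the small-$t$ regime, and comparing combinatorial and analytic heat traces via quantitative Whitney/de Rham estimates. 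Second, the decisive step --- that the localized trace differences are bounded on compact $s$-sets by constants independent of $\delta$, and converge --- is exactly the content of the cited theorems and occupies the bulk of M\"uller's Sections 7--8; you name it as ``the main obstacle'' but supply no mechanism for it (the needed inputs are Sobolev-type bounds on eigenforms uniform in the triangulation and the $O(\delta)$ error estimates for $WR-\textup{id}$ on $H^1$). An application of Vitali's theorem at the end also requires the locally uniform boundedness you have not yet established, so the logical order matters: boundedness must come first and cannot be deferred. In short, your sketch points at the right references and the right difficulties, but it does not constitute an independent proof, and the contour-integral route as stated would need repair.
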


Consider an admissible covering $\{U_\A\}_{\A\in A}$ of $M_f$, with $A=A_0 \dot\cup A_1$
such that $\{U_\A\}_{\A\in A_0}$ covers a collar of the hypersurface $N_1 \cong N_2$, 
and $U_\A \subset M \backslash \partial M$ for each $\A\in A_1$.
Then by locality of the individual summands in
\eqref{E} we find
\begin{equation}\label{123}
\begin{split}
E_{M_f}(s) = \sum_{\A\in A} \phi_\A D_\A^{-s} \psi_\A = 
\sum_{\A \in A_0} \phi_\A D_\A^{-s} \psi_\A + 
\sum_{\A \in A_1} \phi_\A D_\A^{-s} \psi_\A
= \frac{1}{2} E_{\widetilde{M}}(s) .
\end{split}
\end{equation}

In order to establish a similar relation on the combinatorial level, write
$P: L^2(M_f,E) \to WC^0(K_f,E)$ and $P_\A: L^2(U_\A,E) \to WC^0(K_\A,E)$
for the global and local orthogonal projections onto the image of the Whitney map, respectively.
Then by definition $\textup{Tr}E^c_{K_f}(s)= \textup{Tr}E^c_{K_f}(s)P$ and moreover,
the computations in the third displayed equation of \cite[p. 296]{Mue:ATA}
assert that for each $\A \in A$
\begin{align}
|\textup{Tr}(\phi_\A W (D^c_\A)^{-s} A \psi_\A P) - \textup{Tr}(\phi_\A W (D^c_\A)^{-s} A \psi_\A P_\A)| =\epsilon(\delta, s),
\end{align}
converges uniformly to zero on compact subsets of $s\in \C$ as the mesh $\delta>0$ of the
triangulation $K_f$ goes to zero under standard subdivisions.
The terms $\phi_\A W (D^c_\A)^{-s} A \psi_\A P_\A$ are local and the argument
of \eqref{123} applies. This proves
\begin{align}\label{123-c}
E_{K_f}(s) = \frac{1}{2} E_{\widetilde{K}}(s)  + \epsilon'(\delta, s),
\end{align}
where as above $\epsilon'(\delta, s)$ converges uniformly to zero on compact subsets of $s\in \C$ as the mesh $\delta>0$ of the
triangulation $K_f$ goes to zero under standard subdivisions. Combining Theorem \ref{conv} with
the relations \eqref{123} and \eqref{123-c} proves

\begin{thm}\label{main2-b}
As the mesh $\delta>0$ of the triangulation $K$ goes to zero
under standard subdivisions
\begin{equation}
 \begin{split}
\lim_{\delta \to 0}  \frac{(\det' \Delta_{K_f})^2}
{\det' \Delta_{\widetilde{K}}}
= \frac{(\det\nolimits_\zeta \Delta_{M_f})^2}
{\det\nolimits_\zeta \Delta_{\widetilde{M}} }
   \end{split}.
\end{equation}
\end{thm}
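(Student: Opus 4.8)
The plan is to reduce the statement to the convergence of a single zeta-derivative at $s=0$. Recall that for a finite-dimensional non-negative self-adjoint operator one has $\det\nolimits'\Delta = \exp(-\zeta'(0,\Delta))$, and likewise $\det\nolimits_\zeta\Delta = \exp(-\zeta'(0,\Delta))$ in the analytic case. Setting
\[
F_\delta(s) := 2\,\zeta(s,\Delta_{K_f}) - \zeta(s,\Delta_{\widetilde{K}}), \qquad
F(s) := 2\,\zeta(s,\Delta_{M_f}) - \zeta(s,\Delta_{\widetilde{M}}),
\]
the asserted identity is equivalent to $F_\delta'(0) \to F'(0)$ as $\delta\to 0$, since
\[
\log\frac{(\det\nolimits'\Delta_{K_f})^2}{\det\nolimits'\Delta_{\widetilde{K}}} = -F_\delta'(0),
\qquad
\log\frac{(\det\nolimits_\zeta\Delta_{M_f})^2}{\det\nolimits_\zeta\Delta_{\widetilde{M}}} = -F'(0).
\]

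First I would split each combinatorial zeta function, applying Theorem \ref{conv} both to the triangulation $K_f$ of $M_f$ and to its double $\widetilde{K}$ triangulating $\widetilde{M}$, into a parametrix trace and a holomorphic remainder,
\[
\zeta(s,\Delta_{K_f}) = \textup{Tr}\,E^c_{K_f}(s) + h^c_{K_f}(s), \qquad
\zeta(s,\Delta_{\widetilde{K}}) = \textup{Tr}\,E^c_{\widetilde{K}}(s) + h^c_{\widetilde{K}}(s),
\]
where the remainders are holomorphic on all of $\mathbb{C}$ and converge, uniformly on compact subsets, to the holomorphic limits $h_{M_f}(s) = \zeta(s,\Delta_{M_f}) - \textup{Tr}\,E_{M_f}(s)$ and $h_{\widetilde{M}}(s) = \zeta(s,\Delta_{\widetilde{M}}) - \textup{Tr}\,E_{\widetilde{M}}(s)$. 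Collecting terms in the combination gives
\[
F_\delta(s) = \bigl(2\,\textup{Tr}\,E^c_{K_f}(s) - \textup{Tr}\,E^c_{\widetilde{K}}(s)\bigr) + \bigl(2\,h^c_{K_f}(s) - h^c_{\widetilde{K}}(s)\bigr).
\]
By the combinatorial locality relation \eqref{123-c} the first bracket equals $2\epsilon'(\delta,s)$, which tends to zero uniformly on compact subsets, while by Theorem \ref{conv} the second bracket converges to $2h_{M_f}(s) - h_{\widetilde{M}}(s)$. On the analytic side the locality relation \eqref{123} gives $2\,\textup{Tr}\,E_{M_f}(s) = \textup{Tr}\,E_{\widetilde{M}}(s)$, whence $2h_{M_f}(s) - h_{\widetilde{M}}(s) = F(s)$. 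Thus $F_\delta \to F$ uniformly on compact subsets of $\mathbb{C}$.

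It then remains to upgrade convergence of the functions to convergence of their first derivatives at $s=0$. Each $F_\delta$ is entire, being a finite linear combination of finite-dimensional zeta functions, and $F$ is holomorphic in a neighbourhood of $s=0$; since the limit is uniform on compacts, the Cauchy integral formula for the derivative over a small circle about $s=0$ (equivalently, Weierstrass' theorem on locally uniform limits of holomorphic functions) yields $F_\delta'(0) \to F'(0)$. Exponentiating $-F_\delta'(0)\to -F'(0)$ gives the claimed limit.

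The main obstacle is the cancellation of the parametrix traces. Individually the combinatorial zeta functions $\zeta(s,\Delta_{K_f})$ and $\zeta(s,\Delta_{\widetilde{K}})$ do \emph{not} converge to the corresponding analytic zeta functions as $\delta\to 0$: the traces $\textup{Tr}\,E^c$ carry the singular, mesh-dependent contributions and approximate the trace of a pseudodifferential operator rather than a zeta value. The point of the specific combination $(\det\nolimits')^2/\det\nolimits'$ — the doubling along $L_1\sqcup L_2$ weighed against the gluing of $L_1$ onto $L_2$ — is precisely that the relation $E_{K_f} = \tfrac12 E_{\widetilde{K}} + \epsilon'$ of \eqref{123-c}, itself a consequence of the locality of the individual summands in \eqref{E}, forces these singular terms to cancel. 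The deep analytic input that makes the argument rigorous, namely the holomorphy and uniform boundedness of the remainders $h^c$ together with their convergence, is M\"uller's approximation result recalled as Theorem \ref{conv}.
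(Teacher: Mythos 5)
Your proof is correct and follows essentially the same route as the paper: the paper's own argument is precisely to combine Theorem \ref{conv} with the locality relations \eqref{123} and \eqref{123-c} so that the parametrix traces cancel in the combination $2\,\zeta(s,\Delta_{K_f})-\zeta(s,\Delta_{\widetilde{K}})$, leaving a locally uniform limit of holomorphic functions whose derivative at $s=0$ converges. Your write-up merely makes explicit the steps the paper leaves to the reader (the Weierstrass argument for $F_\delta'(0)\to F'(0)$ and the observation that the individual zeta functions do not converge), which is a faithful elaboration rather than a different proof.
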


As an obvious consequence of Theorem \ref{main2-b}, \eqref{gluing-loc-formula} and Theorem \ref{BFK-theorem} we arrive at the following 
relation between the analytic and combinatorial DN operators

\begin{align*}
&\lim_{\delta \to 0} \, \frac{(\det'\mathscr{R}_c(K_f,L_2))^2}
{\det'\mathscr{R}_c(\widetilde{K},L_1 \sqcup L_2)} \cdot 
\frac{\det \left(Q_{L_2\sqcup L_3} - \widetilde{A}^t \circ Q^{-1}_{(K\backslash L) \sqcup (K\backslash L)} \circ \widetilde{A}\, \right)}
{\det (Q_{L_2} -A^t \circ Q^{-1}_{K\backslash L} \circ A)^2}
\\ &= \frac{\mathscr{C}^2_{M_f,N_2}}{\mathscr{C}_{\widetilde{M},N_1\sqcup N_2}}
\cdot \frac{ (\det\nolimits_\zeta \mathscr{R}_a(M_f,N_2))^2}
{\det\nolimits_\zeta \mathscr{R}_a(\widetilde{M},N_1\sqcup N_2)},
\end{align*}
where $\widetilde{A}=A\oplus A$. Corollary \ref{main3} now follows from the next 

\begin{prop} For $\delta>0$ sufficiently small
$$\frac{\det \left(Q_{L_2\sqcup L_3} - \widetilde{A}^t \circ Q^{-1}_{(K\backslash L) \sqcup (K\backslash L)} \circ \widetilde{A}\, \right)}
{\det (Q_{L_2} -A^t \circ Q^{-1}_{K\backslash L} \circ A)^2}=1.$$
\end{prop}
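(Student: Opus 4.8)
The plan is to read both determinants as Schur complements and then reduce the identity to an orthogonal direct-sum decomposition of the relevant Gram data. First I would invoke the block decomposition of \eqref{block}: writing $Q_{K_f}$ for the Gram matrix of the inner product on $C^0(K_f,E)$ in the vertex basis, the matrix $Q_{L_2}-A^t Q_{K\backslash L}^{-1}A$ in the denominator is precisely the Schur complement of the interior block $Q_{K\backslash L}$ inside $Q_{K_f}$, and likewise $Q_{L_2\sqcup L_3}-\widetilde{A}^t Q^{-1}_{(K\backslash L)\sqcup(K\backslash L)}\widetilde{A}$ is the Schur complement of the interior block inside the Gram matrix of the doubled complex $\widetilde{K}$. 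The proposition is therefore equivalent to the assertion that this latter Schur complement is isometric to the orthogonal direct sum of two copies of the former.

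Next I would establish this splitting from the structural data of the double. Since $\widetilde{K}$ is built by gluing two copies of $K$ along $L_1\sqcup L_2$, cutting along the seam returns two complexes sharing no simplices; hence the interior Gram matrix is block diagonal, $Q_{(K\backslash L)\sqcup(K\backslash L)}=Q_{K\backslash L}\oplus Q_{K\backslash L}$, and the boundary--interior interaction is correspondingly $\widetilde{A}=A\oplus A$, as already recorded in the statement. The Mayer--Vietoris/locality property of the Whitney product from \S\ref{locality} expresses the seam Gram matrix of $\widetilde{K}$ through the two copies, and the isometry $f$ together with the product structure near $N_1$ and $N_2$ identifies the two seam contributions, so the seam block also splits into two copies of $Q_{L_2}$. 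Feeding these three block-diagonal facts into the Schur complement makes it a genuine orthogonal direct sum of two copies of $Q_{L_2}-A^t Q_{K\backslash L}^{-1}A$, whence multiplicativity of the determinant yields the square and the ratio $1$.

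The hypothesis that $\delta$ be sufficiently small enters exactly here: it guarantees that the closed stars $\overline{\textup{St}}(L_1)$ and $\overline{\textup{St}}(L_2)$ are disjoint subcomplexes, so that $A$ splits as $A_1\oplus A_2$ supported on two disjoint collars and the seam components do not interfere. For coarse triangulations these collars may overlap and the clean decoupling would fail.

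The step I expect to be the main obstacle is the decoupling of the two seam components $L_1$ and $L_2$ through the interior. Although $Q_{K\backslash L}$ is local, its inverse is not, so a priori the term $\widetilde{A}^t Q^{-1}_{(K\backslash L)\sqcup(K\backslash L)}\widetilde{A}$ could carry off-diagonal blocks linking $L_1$ to $L_2$ across the bulk. The crucial point to make rigorous is that the combinatorial Dirichlet-to-Neumann data of \eqref{DN-gluing} for the double is assembled from the cut complex $K\sqcup K$, whose two interiors are genuinely disjoint; there $Q^{-1}_{(K\backslash L)\sqcup(K\backslash L)}$ is block diagonal by construction, so the cross terms are absent and the Schur complement is a true direct sum. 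Once this is secured, the reflection symmetry of $\widetilde{K}$ and the isometry $f$ complete the identification of the two blocks.
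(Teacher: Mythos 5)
You correctly isolate the crux of the matter --- the non-locality of $Q^{-1}_{K\backslash L}$ --- but the resolution you propose does not close the gap. The block-diagonality $Q^{-1}_{(K\backslash L)\sqcup(K\backslash L)}=Q^{-1}_{K\backslash L}\oplus Q^{-1}_{K\backslash L}$ is a decomposition with respect to the \emph{two copies} of $K\backslash L$, whereas the dangerous cross terms couple $L_1$ to $L_2$ \emph{within a single copy}: each copy of $K\backslash L$ is adjacent to both seam components, so writing the boundary--interior interaction of one copy as $B_1\oplus B_2$ with $B_i$ supported near $L_i$, the seam block of $\widetilde{A}^t\circ Q^{-1}_{(K\backslash L)\sqcup(K\backslash L)}\circ\widetilde{A}$ acquires the off-diagonal entries $2\,B_1^t\circ Q^{-1}_{K\backslash L}\circ B_2$. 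These involve exactly the matrix elements of $Q^{-1}_{K\backslash L}$ joining the collar of $L_1$ to the collar of $L_2$ across the bulk, and the inverse of a local positive-definite Gram matrix is generically a full matrix: already for an interval triangulated as a path with the Whitney product, $Q_{K\backslash L}$ is tridiagonal and $(Q^{-1}_{K\backslash L})_{1,n-1}\neq 0$. Passing to the cut complex $K\sqcup K$ does nothing to remove these terms, and disjointness of the closed stars of $L_1$ and $L_2$ (where you let $\delta$ enter) only yields the splitting of $A$, which is far from sufficient. So the step you yourself flag as the main obstacle is precisely where the argument fails as written.

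What the paper does instead --- and what your proposal is missing --- is a support statement for $Q^{-1}_{K\backslash L}$ applied to the image of $A$: using the identity $\langle\omega,Q^{-1}_{K\backslash L}v\rangle_K=\langle\omega,v\rangle_0$ together with locality of the Whitney product, it argues that for $v$ in the image of $A$ (hence supported in $\textup{St}(L)$) the cochain $Q^{-1}_{K\backslash L}v$ stays supported in $\textup{St}(\textup{St}(L))$, so that $Q^{-1}_{K\backslash L}\circ A$ depends only on the metrization of a collar of the seam; the product structure of the metric there then delivers simultaneously the vanishing of the $L_1$--$L_2$ cross block and the identification of the two diagonal blocks with $Q_{L_2}-A^t\circ Q^{-1}_{K\backslash L}\circ A$. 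That support claim is the entire content of the proposition and the true role of the hypothesis on $\delta$. It is also the delicate point: since inverses of local operators are not local, any complete proof must confront the decay of $Q^{-1}_{K\backslash L}$ away from the seam directly (the path-graph computation above shows the cross term is nonzero, merely exponentially small in the combinatorial distance between $L_1$ and $L_2$), rather than appeal to a direct-sum structure that is not actually present.
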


\begin{proof}
Denote by $\mathscr{U}_L \subset K_f$ the metrized collar complex of $L_2\subset K_f$. Let $v\in C^0(K\backslash L)$ be an element in the image of $A$. By locality of 
the Whitney inner product, $v$ is supported inside the star neighborhood $St(L_2)$ of $L_2\subset K_f$, which consists of all those 
vertices $\tau \in K_f$, which are connected to $L_2$ by an edge. For any $\omega \in C^0(K\backslash L)$
$$
\langle \omega, Q^{-1}_{K\backslash L} v\rangle_K = \langle \omega, v \rangle_0
$$
and equals zero, if and only if $\omega$ is supported in $K_f \backslash St(L_2)$. By locality of the Whitney inner product,
this in turn implies that $Q^{-1}_{K\backslash L} v$ is supported inside $St(St(L_2))$. 
Assume $\delta>0$ is sufficiently large, so that $Q_{K_f}(St(St(L_2)))\subset \mathscr{U}_L$. Then we find
$$
Q_{\mathscr{U}_L\backslash L_2} [Q^{-1}_{K\backslash L} v] = Q_{K\backslash L} [Q^{-1}_{K\backslash L} v] = v,
$$
and hence $Q^{-1}_{K\backslash L} v = Q^{-1}_{\mathscr{U}_L\backslash L_2} v$. Consequently
$Q^{-1}_{K\backslash L} \circ A$ depends only on the metrization over $\mathscr{U}_L$ and the statement follows immediately
from the product metric structure assumption. 
\end{proof}

\section{Conclusion}\label{research}

This paper is a step towards constructing the quantum field theory of free scalar Bose field on a
Riemannian manifold $M$ as a limit of Gaussian quantum field theories on finite metrized simplicial
approximations of $M$ when the mesh of the approximation goes to zero.
In quantum field theory and statistical mechanics such limit is known as a scaling limit
(near a point of phase transition, which is mesh equals to zero in our case). The main step
towards completion of such program is the characterization of the zero mesh limit of determinants
of combinatorial Laplacians. One should expect that as mesh $\epsilon$ goes to zero
\[
\log\det(\Delta_{K_\epsilon}+m^2)=2 N_\epsilon + \sum_{j=1}^{n-1} c_j\epsilon^j +
\mbox{ log. terms } +c_0+\log\det\nolimits_\zeta(\Delta+m^2)+ o(1).
\]
Here $N_\epsilon$ is the number of vertices in $K_\epsilon$, $n$ is the dimension of the simplicial complex,
constants $c_0,\dots, c_{n-1}$ are not universal, i.e.
they depend on $K_\epsilon$. This problem is largely open. For some results in this direction see
\cite{Ken:TAD}, \cite{CJK:ZFH}, \cite{AS}.

Another problem, closely related to this paper is the construction of
of topological quantum field theories based on an approximation of
space times by simplicial complexes. An example of such TQFT was constructed in \cite{Mn}.
In this case we will have combinatorial torsions instead of determinants of
Laplacians. Discrete version of the De Rham differential and of the exterior
product are given by corresponding $A_\infty$ algebras. A very important but
entirely understood question in this direction, see e.g. Wilson \cite{Wilson}, is what is the most natural
discrete counterpart of Riemannian geometry (in particular of Hodge star operation).

We did not discuss here first order formulation of the classical
field theory on simplicial complexes (see for example \cite{CMR} for
classical field theories on manifolds). We will do it in a separate
publication.

The approximation of space time by a complex for special
box complexes is well known in constructive field theory.
In this sense the present paper can be regarded as a development in
constructive quantum field theory with the aim to construct
an Atiyah-Segal style quantum field theory. In dimension $2$
this was done in \cite{P:CFT} for quantum $P(\phi)_2$ theory directly
in the continuum case. It would be interesting to derive these results
from simplicial approximations.

\providecommand{\bysame}{\leavevmode\hbox to3em{\hrulefill}\thinspace}
\providecommand{\MR}{\relax\ifhmode\unskip\space\fi MR }
\providecommand{\MRhref}[2]{%
  \href{http://www.ams.org/mathscinet-getitem?mr=#1}{#2}
}
\providecommand{\href}[2]{#2}

\end{document}